\numberwithin{equation}{section}
\DeclareMathOperator{\erf}{erf}
\DeclareMathOperator{\erfc}{erfc}
\DeclareMathOperator{\var}{var}
\newcommand{\tcb}[1]{\textcolor{blue}{#1}}
\newcommand{\bbZ}{\mathbb{Z}}
\newcommand{\cO}{\mathcal{O}}
\newtheorem{theorem}{Theorem}[section]
\numberwithin{equation}{section}
\begin{document}

\title{Improved random batch Ewald method in molecular dynamics simulations}

\author[1]{Jiuyang Liang}
\author[1,2]{Zhenli Xu\thanks{xuzl@sjtu.edu.cn}}
\author[1]{Yue Zhao}
\affil[1]{School of Mathematical Sciences, Shanghai Jiao Tong University, Shanghai, 200240, P. R. China}
\affil[2]{MOE-LSC, CMA-Shanghai and Shanghai Center for Applied Mathematics, Shanghai Jiao Tong University, Shanghai 200240, P. R. China}
\date{\today}

\maketitle

\begin{abstract}
The random batch Ewald (RBE) is an efficient and accurate method for  molecular dynamics (MD) simulations of
physical systems at the nano-/micro- scale. The method shows great potential to solve the computational bottleneck of long-range interactions, motivating a necessity to accelerating short-range components of the non-bonded interactions for a further speedup of MD simulations.
In this work, we present an improved RBE method for the non-bonding interactions by introducing the random batch idea
to constructing neighbor lists for the treatment of both the short-range part of the Ewald splitting and the Lennard-Jones potential.
The efficiency of the novel neighbor list algorithm owes to the stochastic minibatch strategy which can significantly reduce the total
number of neighbors. We obtan the error estimate and convergence by theoretical analysis and implement the improved RBE method in the LAMMPS package.  Benchmark simulations are performed to demonstrate the accuracy and stability of the algorithm. Numerical tests on computer performance by conducting large-scaled MD simulations for systems including up to 0.1 billion water molecules, run on massive cluster with up to 50 thousand CPU cores, demonstrating the attractive features such as the high parallel scalability and memory-saving of the method in comparison to the existing methods.

{\bf Key words}. Molecular dynamics simulations, electrostatics, random batch Ewald, random batch list

\end{abstract}

\section{Introduction}
Molecular dynamics (MD) furnishes a powerful tool for understanding equilibrium and dynamical properties of a broad range of systems at the molecular and atomic level \cite{karplus1990molecular,hollingsworth2018molecular,yamakov2002dislocation,brunger1987crystallographic,allen2017computer}, including physical, chemical, biological and materials sciences.
To obtain the trajectories of particles by solving the equations of motion, one essential element for an MD simulation is a knowledge of the inter-particle potential from which the force acting on each particle can be calculated.
The interaction force may vary from intramolecular forces to more complicated many-body forces between atoms and molecules, dominating the central processing unit (CPU) cost in comparison with other computational procedures such as thermostat and barostat.
Once the interaction force can be successfully calculated, the equations of motion by Newton's law are integrated to obtain the spatial position and temporal velocity of each atom at each time step.

Classical MD uses a molecular mechanics force field to model the inter-particle forces, which is a parameter set by fitting results of quantum mechanical calculations and, typically, to certain experimental measurements.
Generally, the force field model  \cite{brooks1983charmm} is composed of bond stretching, bending and torsional forces, and two non-bonded interactions including the van der Waals and electrostatic forces.
From computational point of view, the bonded interactions are less expensive as  the involving atoms are no more than three covalent bonds.
Whereas, the non-bonded forces are the computational bottleneck as the electrostatic forces must be calculated between all pairs of particles, and the van der Waals forces need to calculate the pair interactions less than some cutoff radius ($0.9\sim 1.5 nm$),
remaining a broad interest for algorithm development and optimization. Particularly, electrostatic interactions are ubiquitous in biomolecular and material systems such as DNA aggregation \cite{burak2003onset}, protein folding/unfolding \cite{davis2016role,zhou2018electrostatic,ghosh2022curious}, the form of surface pattern \cite{schiel2021molecular}, ion adsorption\cite{frigini2022effect}, and polyelectrolyte complexation \cite{lund2013charge}. An efficient and accurate electrostatic solver plays essential role for the simulations of these systems.

Various fast methods have been developed for the non-bonded interactions. It is noted that the Lennard-Jones (LJ) model is often used for the van der Waals force and it can be calculated with $\mathcal{O}(N)$  complexity due to the cutoff scheme  \cite{frenkel2001understanding}
which truncates the interaction potential between a pair of particles at a radial cutoff distance and ignores the pairs of larger distances.
Techniques such as the Verlet list method \cite{verlet1967computer} and the linked cell list method \cite{quentrec1973new} are often employed.
In mainstream MD packages \cite{thompson2021lammps,abraham2015gromacs}, electrostatic interactions are calculated by the Ewald-type lattice summation where the long-ranged smooth part is treated on uniform mesh via fast Fourier transform\cite{DYP:JCP:1993,HE::1988,EPB+:JCP:1995} (FFT) with $\mathcal{O}(N\log N)$ complexity. The remaining part is short ranged and can be calculated by the cutoff scheme similar to the LJ interactions.
It is mentioned that there are other kinds of powerful and linear-scaling method for fast evaluating electrostatic interactions, including fast multipole methods\cite{RN11,RN12}, multigrids and Maxwell-equation molecular dynamics\cite{maggs2002local}, where the fast multipole methods are particulary efficient  for systems for  large scale problems and for the case of inhomogeneous particle distribution \cite{kohnke2020gpu,LIANG2022108332}.
These fundamental methods for computing forces are all required to combine with modern distributed architectures, i.e., the well-known 3D domain decomposition\cite{frenkel2001understanding,plimpton1995fast,begau2015adaptive}. The advances on domain decomposition, building neighbor lists, and time integration are contributed to the latest topics of algorithm development.

The calculation of long-ranged electrostatic interactions require intensive communications between cores, which significantly reduce the parallel efficiency for large scale systems \cite{allen2017computer,RN18,RN70,RN68}. Besides, the proportion of the CPU cost on short-range interactions also tends to play an essential
contribution due to the balance strategy \cite{thompson2021lammps,abraham2015gromacs} between short-range and long-range interactions.
A crucial observation \cite{leiserson2020there,Sherry2021Proceeding} is that there are decreasing marginal returns to algorithmic innovation, because the easy-to-catch innovations have already been ``fished-out" \cite{kortum1997research} and the remaining is more difficult to find or provides smaller gains.
For MD simulations, this marginal effect also arises, where the bottleneck on communication latency for the calculation of non-bonding forces has no transformative advances for several decades \cite{RN18,DEShaw2020JCP}.

In recent years, stochastic algorithms emerge gradually and furnish an important bridge linking traditional methods and modern massive high-performance computing, lifting both efficiency and scalability\cite{liang2022superscalability,lu202186}. For electrostatic interactions, the random batch Ewald (RBE) method proposed recently \cite{jin2021random,liang2022superscalability}, has presented its tremendous potential to overcome the scalability issues for deterministic algorithms such as particle-mesh Ewald. The RBE method is based on the Ewald summation. It avoids the use of the FFT
by employing a random mini-batch importance sampling strategy on the Fourier components to approximate the force and
pressure contributions from the long-range part. Since its superior speedup in the Fourier space, it is pointed out \cite{liang2022superscalability,liang2021random1} that the short-range part becomes the bottleneck of the MD simulations,
and a further acceleration requires a high-efficient treatment of the short-range contributions of the non-bonded force,
including the LJ and the real part of the Ewald splitting. In this paper we propose an improved RBE (IRBE) method
employing the random batch list (RBL) scheme \cite{liang2021random} to accelerate the short-range calculation under the RBE
framework. Basically, the RBL does the random batch in the real space, resulting in a significant reduction in the number
of neighbors for each particle.
The RBL is an extension of the original random batch method \cite{jin2020random} by
dividing the region into a core-shell structure and constructing the minibatch for particles in the shell region.
This RBL idea works well for the LJ potential \cite{liang2021random}, and in this paper we demonstrate its extension 
for the short-range part of Coulomb interaction is effective and theoretically obtain the error estimate.
The IRBE method is implemented in the LAMMPS package, and a systematic investigation was conducted on both the primitive-model eleoctrlyte  and 
and all-atom  systems including the bulk water and a micro-phase separated LiTFSI ionic liquid.
Our numerical results demonstrate the attractive performance of the IRBE method, showing its great promise in saving computer resources of both CPU and memory costs, thus furnishing a useful tool to address the above-mentioned marginal effect.

\section{Methods}
\subsection{Classical non-bonded algorithms}
Before presenting the IRBE method, we briefly review classical methods for evaluating non-bonded interactions in a 3D periodic system,
including the Ewald and neighbor list methods.

Consider a system of $N$ atoms with charge $q_i$ at position $\bm{r}_i$ for $i=1,\cdots,N$. These atoms
are within a cubic simulation box $\Omega$ with side length $L$ and volume $V=L^3$, with a periodic boundary condition
specified to mimic a bulk environment. The charges in the system obey the neutrality condition
$\sum_j q_i=0.$
Under these conditions, the non-bonded potential energy due to the LJ and electrostatic interactions can be written
as the following lattice summation,
\begin{equation}\label{eq:Utot}
U=\frac{1}{2}\sum_{\bm{n}}{}'\sum_{i,j=1}^N \left\{q_iq_j \frac{1}{|\bm{r}_{ij}+\bm{n}L|}+4\epsilon_{ij} \left[\left(\frac{\sigma_{ij}}{|\bm{r}_{ij}+\bm{n}L|}\right)^{12}-\left(\frac{\sigma_{ij}}{|\bm{r}_{ij}+\bm{n}L|}\right)^6\right]\right\},
\end{equation}
where $\bm{r}_{ij}:=\bm{r}_j-\bm{r}_i$, $\bm{n}$ runs over all 3D vector with integer components, and $\epsilon_{ij}$ and $\sigma_{ij}$ are the depth of the LJ potential well and the distance where the potential changes sign between $i$th and $j$th particles, respectively.
The prime in Eq.\eqref{eq:Utot} is understood that the singular term when $i=j$ and $\bm{n}=(0,0,0)$ should be excluded.
The difficulties of calculating Eq.\eqref{eq:Utot} is twofold.
First, the long-range nature of the Coulomb potential leads to the conditionally convergent series, thus the direct truncation will produce artifacts\cite{frenkel2001understanding} and should be avoided. Second, the short-range part requires expensive $\mathcal{O}(N^2)$ cost for directly searching neighbors within a given radius at each time step.

Ewald summation\cite{ewald1921berechnung} addresses the first problem by a splitting strategy which decomposes the Coulomb kernel into a sum of two components,
\begin{equation}\label{eq::ewaldsplitting}
\frac{1}{r}=\frac{\erf(\sqrt{\alpha}r)}{r}+\frac{\erfc(\sqrt{\alpha}r)}{r}.
\end{equation}
Here $\alpha$ is a positive parameter controlling the decay rate of real space and Fourier space series,
$\erf(x):=(2/\sqrt{\pi})\int_0^x e^{-u^2}du$ is the error function,
and $\erfc(x)=1-\erf(x)$ denotes the error complementary function. After the Ewald splitting, the first component in Eq.\eqref{eq::ewaldsplitting}
becomes long ranged and smooth, thus can be handled in the Fourier space via the so-called Fourier transform,
$
\hat{f}(\bm{k})=\int_{\Omega} f(\bm{r})e^{-i\bm{k}\cdot\bm{r}}d\bm{r},
$
where $\bm{k}=2\pi\bm{m}/L$ with $\bm{m}\in\bbZ^3$. The second component in Eq.\eqref{eq::ewaldsplitting} is singular but short ranged, and thus can be truncated at a certain cutoff radii $r_\mathrm{c}$. Denote $k_c$ and $r_{lj}$ as the cutoff of the Fourier space and the cutoff of the LJ potential, respectively. The non-bonded potential energy $U$ by Eq.\eqref{eq:Utot} can be rewritten as a sum of the following four contributions:
\begin{equation}\label{eq:uuu}
\begin{aligned}
&U^{short}=\frac{1}{2}\sum_{|\bm{r}_{ij}+\bm{n}L|\le r_\mathrm{c}}{}'~q_iq_j\frac{\erfc(\sqrt{\alpha}|\bm{r}_{ij}+\bm{n}L|)}{|\bm{r}_{ij}+\bm{n}L|},\\
&U^{lj}=\frac{1}{2}\sum_{|\bm{r}_{ij}+\bm{n}L|\le r_{lj}}{}'~4\epsilon_{ij} \left[\left(\frac{\sigma_{ij}}{|\bm{r}_{ij}+\bm{n}L|}\right)^{12}-\left(\frac{\sigma_{ij}}{|\bm{r}_{ij}+\bm{n}L|}\right)^6\right],\\
&U^{long}=\frac{2\pi}{V}\sum_{\substack{\bm{k}\neq \bm{0}\\|\bm{k}|\leq k_c}}\frac{1}{|\bm{k}|^2}|\rho(\bm{k})|^2
e^{-|\bm{k}|^2/4\alpha},\\
&U^{corr}=-\sqrt{\frac{\alpha}{\pi}}\sum_{i=1}^N q_i^2,
\end{aligned}
\end{equation}
where $U^{short}$ and $U^{long}$ are short- and long-range parts of the electrostatic interactions, $U^{lj}$ is the LJ interaction, and $U^{corr}$ is the correction term due to the self energy. The structure factor $\rho(\bm{k})$ is the conjugate of the
Fourier transform of the charge density, defined as
\begin{equation}
\rho(\bm{k}):=\sum_{i=1}^N q_i e^{i\bm{k}\cdot\bm{r}_i}.
\end{equation}

By proper choice of the parameters ($\alpha$ and the real-space and Fourier-space cutoffs $r_\mathrm{c}$ and $k_c$), the computational complexity for both $U^{long}$ and $U^{short}$ is optimized to $\mathcal{O}(N^{3/2})$\cite{RN28,frenkel2001understanding}. Moreover, the FFT is often employed to further speed up the evaluation of $U^{long}$ such that the cutoff radius $r_\mathrm{c}$ can be much smaller, resulting in the core algorithms for mainstream software, including the particle-particle particle-mesh Ewald\cite{HE::1988} (PPPM), particle mesh Ewald\cite{DYP:JCP:1993}, and smooth particle mesh Ewald\cite{EPB+:JCP:1995} algorithms. The final computational complexity of $\mathcal{O}(N\log N)$ can be achieved through these methods for
periodic systems.

With the FFT acceleration on $U^{long}$, the cutoff radius $r_\mathrm{c}$ is often set to be the same as $r_{lj}$ so that $U^{short}$ and $U^{lj}$ can use the same neighbor list. Then, to handle the short-range part efficiently, a Verlet-style neighbor list\cite{verlet1967computer} is often created.
This neighbor list enumerates all pairs of atoms with separation less than a cutoff distance. The building of neighbor list involves a stencil of bins to check for possible neighbors, a procedure for binning atoms, and a loop on the resulting stencil to assemble the neighbor list according to some cutoff criterions. The Verlet list\cite{verlet1967computer} also introduces an additional larger cutoff radius to reduce the frequency of neighbor list establishment.
Building such a local neighbor list is in linear time, and the complexity of evaluating $U^{short}+U^{lj}$ also achieves linear scaling.
The prefactor is related to both the cutoff radii and the bin size.
A significant problem of classical neighbor list method is that the neighbor list typically consume the most memory of any data structure in the mainstream MD software\cite{thompson2021lammps,abraham2015gromacs}.
This is mainly because the average number of particles within the cutoff radius can be large for heterogeneous systems due to the use of a big radius.
As an example, the bulk water system is often simulated with cutoff radius $r_\mathrm{c}=1.2nm$ with which the average number
of neighbors is $\sim 700$.

A lot of techniques have been developed to optimize the Verlet list method, aiming at the reduction in both the memory access and computational cost.
Many of them are devoted to the improvement of the linked cell list method\cite{yao2004improved,welling2011efficiency} employed for the Verlet list construction.
There are also attempts on providing scalable algorithms that work well on a modern computer architecture,
including multi-core CPU, graphics processing unit, or even more in-depth coding technologies such at the single-instruction multiple
data (SIMD) vectorization instructions \cite{pall2013flexible,howard2016efficient,thompson2021lammps,tchipev2019twetris}.

Finally, the atomic non-bonded force $\bm{F}_i^{tot}$ is obtained as the negative gradient of the energy function $U$, which is given by,
\begin{equation}\label{eq:fffr}
\begin{aligned}
&\bm{F}_i^{short}=-q_i\sum_{r_{ij}\le r_\mathrm{c}}{}'~q_j \left[ \frac{\erfc(\sqrt{\alpha}r_{ij})}{r_{ij}^2} +\frac{2\sqrt{\alpha}e^{-\alpha r_{ij}^2}}{\sqrt{\pi}r_{ij}} \right] \frac{\bm{r}_{ij} }{r_{ij}},\\
&\bm{F}_i^{lj}=-\sum_{r_{ij}\le r_\mathrm{c}}{}'~\epsilon_{ij}\left(\dfrac{48\sigma_{ij}^{12}}{r_{ij}^{14}}-\dfrac{24\sigma_{ij}^{6}}{r_{ij}^{8}}\right)\bm{r}_{ij},\\
&\bm{F}_i^{long}=-\sum_{\bm{k}\neq \bm{0}}\frac{4\pi q_i \bm{k}}{V |\bm{k}|^2}
e^{-|\bm{k}|^2/4\alpha}\mathrm{Im}(e^{-i\bm{k}\cdot\bm{r}_i}\rho(\bm{k})),
\end{aligned}
\end{equation}
where $r_{ij}=|\bm{r}_{ij}|$, and $j$ runs over all particles in the central box and periodic images.
It is noted that $\bm{F}_i^{short}$ and $\bm{F}_i^{lj}$ are singular but short-range, and $\bm{F}_i^{long}$ is smooth and long-range, corresponding to the terms in Eq.\eqref{eq:uuu}. The correction term in \eqref{eq:uuu} is constant for given $\alpha$ and makes no contribution to the force.

As was pointed out\cite{pennycook2013exploring,tchipev2019twetris}, the enormous memory consumption of neighbor list for evaluating short-range interactions and the intensive communication of FFT for evaluating long-range interactions are two bottlenecks of MD simulations, limiting both the system scale and the time scale.
In the next section, we attempt to address these bottlenecks by developing the IRBE method.

\subsection{Random batch importance sampling}\label{sec::rbe}
The RBE is a fast algorithm for calculating the long-range force $\bm{F}_i^{long}$. Its idea is based on an observation that the Gaussian factor $e^{-|\bm{k}|^2/4\alpha}$ in Eq.\eqref{eq:fffr} can be normalized as a discrete probability distribution\cite{jin2021random}.
Consider the following identity,
\begin{equation}\label{eq:S}
S:=\sum_{\bm{k}\neq \bm{0}}e^{-\bm{k}^2/(4\alpha)}=H^3-1,
\end{equation}
where $H$ can be represented as follows by the Poisson summation formula,
\begin{equation}\label{eq:H}
H:=\sum_{m\in \bbZ}e^{-\pi^2 m^2/(\alpha L^2)}
=\sqrt{\frac{\alpha L^2}{\pi}}\sum\limits_{m\in \bbZ}e^{-\alpha m^2L^2}.
\end{equation}
In our applications, $\alpha L^2\gg 1$ holds and one can truncate $m$ at $\pm 1$, resulting in an efficient approximation in the rightmost of Eq.\eqref{eq:H}. One can regard the series summation as a functional expectation over the probability distribution
\begin{equation}\label{eq:probexpression}
\mathcal{P}_{\bm{k}}:=S^{-1}e^{-|\bm{k}|^2/(4\alpha)},\quad \bm{k}\neq\bm{0}.
\end{equation}

Since the Gaussian distribution is separable in multiple dimensions, $\mathcal{P}_{\bm{k}}$ can be sampled independently in each axis
with $\bm{k}=\bm{0}$ being skipped.
For more details, one samples from 1D Gaussian distribution $x^*\sim \mathcal{N}(0, \alpha L^2/(2\pi^2))$ and set $m^{\ast}=\mathrm{round}(x^{\ast})$, the acceptance rate is high via the Metropolis-Hasting sampling algorithm\cite{frenkel2001understanding}.
This procedure is independently repeated for three times.
A highly efficient parallel strategy for the importance sampling under NVT/NPT ensemble is also developed\cite{liang2022superscalability}.
The MD simulations can then be performed via the random mini-batch approach with this importance sampling strategy.
Specially, one can approximate the long-range force $\bm{F}_i^{long}$ in Eq.\eqref{eq:fffr} by the following random variable
\begin{equation}\label{eq:RBEFF}
\bm{F}_i^{long,*}:=-\sum\limits_{\ell=1}^{p} \frac{S}{p}\frac{4\pi \bm{k}_\ell q_i}{V |\bm{k}_\ell|^2}\mathrm{Im}(e^{-i\bm{k}_\ell\cdot\bm{r}_i}\rho(\bm{k}_\ell)),
\end{equation}
where $\bm{k}_{\ell}$, $1\le \ell \le p$, are the sampled frequencies. In the RBE method\cite{jin2021random}, the choice of parameter $\alpha$ shares the same strategy as in the PPPM method\cite{deserno1998mesh}, such that the computational complexity for the short-range interactions is $\mathcal{O}(N)$. If we choose $p$ as an $\mathcal{O}(1)$ constant, the complexity for approximating $\bm{F}_i^{long}$ via Eq.\eqref{eq:RBEFF} is also linear, namely, with complexity $\mathcal{O}(pN)$.

\subsection{Improved RBE method}\label{sec::rbl}

The RBE has shown its superscalibility in large-scale all-atom simulations \cite{liang2022superscalability},
resulting in that the evaluation of short-range non-bonded interactions becomes the most time-consuming part\cite{liang2021random1}.
To reduce the calculation for short-range interactions and save CPU memory in the original RBE method,
we introduce the RBL algorithm, first proposed for pure LJ fluid systems\cite{liang2021random}.


The RBL idea is based on a neighbor-splitting strategy. Let $r_\mathrm{c}$ be the cutoff radius for calculating both
$\bm{F}_i^{short}$ and $\bm{F}_{i}^{lj}$; typically, $r_\mathrm{c}=1.2 nm$ for all-atom simulations thus there are hundreds of neighbors
for each particle should be stored in the neighbor list.
The RBL method introduces the second radius $r_{\eta}<r_\mathrm{c}$ (e.g., $r_\eta=0.6 nm$) such that two-level core-shell structured neighbor lists are constructed around each particle. Direct summation is used for neighbors within the core region ($r_{ij}<r_\eta$), whereas a small number of particles from the shell zone are randomly chosen into a batch and other neighbors in the zone are ignored.
The central particle then interacts with the batch particles with a rescaled strength.
This idea is demonstrated that the average interacting neighbors of each particle are significantly reduced, and the accuracy is maintained by resolving the kernel singularity issue for the LJ systems\cite{liang2021random}.

Let $\bm{\mathcal{F}}_i=\bm{F}_i^{short}+\bm{F}_{i}^{lj}$ be the short-range non-bonded force on the $i$th particle.
Let $\mathcal{C}(i)=\{j\neq i: r_{ij}\leq r_\eta\}$ and $\mathcal{S}(i)=\{j: r_\eta<r_{ij}\leq r_\mathrm{c}\}$ be the neighbor sets of
the $i$th particle in the core and shell regions, respectively.
We now decompose it into the contributions from the core and the shell regions, separately,
$\bm{\mathcal{F}}_i=\bm{\mathcal{F}}_i^{core}+\bm{\mathcal{F}}_i^{shell}$, where
$\bm{\mathcal{F}}_i^{core}=\sum_{j\in \mathcal{C}(i)} \bm{f}_{ij}$,
and $\bm{\mathcal{F}}_i^{shell}=\sum_{j\in \mathcal{S}(i)} \bm{f}_{ij}$,
and the force contribution due to particle $j$ is,
\begin{equation}\label{eq::short1}
\bm{f}_{ij}=-q_iq_j \left[ \frac{\erfc(\sqrt{\alpha}r_{ij})}{r_{ij}^2} +\frac{2\sqrt{\alpha}e^{-\alpha r_{ij}^2}}{\sqrt{\pi}r_{ij}} \right] \frac{\bm{r}_{ij}}{r_{ij}}
-\epsilon_{ij}\left(\dfrac{48\sigma_{ij}^{12}}{r_{ij}^{14}}-\dfrac{24\sigma_{ij}^{6}}{r_{ij}^{8}}\right)\bm{r}_{ij}.
\end{equation}

Let $\mathcal{B}(i)$ be the batch of $\widetilde{p}$ particles randomly chosen from set $\mathcal{S}(i)$.
By following the random batch method \cite{jin2020random}, the force $\mathcal{F}_i^{shell}$ is approximated by
\begin{equation}\label{eq:3}
\bm{\mathcal{F}}_i^{shell,*}=\dfrac{N_{\mathcal{S}}}{\widetilde{p}}\sum_{j\in\mathcal{B}(i)}~\bm{f}_{ij},
\end{equation}
where $N_{\mathcal{S}}$ is the size of set $\mathcal{S}(i)$. It can be proved that the approximation is an unbiased estimate
of the exact force from those particles in $\mathcal{S}(i)$. Let $\bm{\mathcal{F}}^{corr}=\sum_i\left(\bm{\mathcal{F}}_i^{core}+\bm{\mathcal{F}}_i^{shell,*}\right)/N$
be the average net force
on each particle, which is a random variable with zero expectation and bounded variance.
One then obtains the stochastic approximation for short-range force, expressed by
\begin{equation}\label{eq:short1}
\bm{\mathcal{F}}_i\approx \bm{\mathcal{F}}_i^{core}+\bm{\mathcal{F}}_i^{shell,*}-\bm{\mathcal{F}}^{corr}.
\end{equation}
It is noted that the subtraction of the net force ensures the conservation of the total momentum in the system.
We shall also remark that the random batch idea leads to a significant reduction of neighbors, as analyzed below.

Now by introducing the RBL for the real space to the RBE, we develop an integrated stochastic approximation
of the non-bonded force, namely the IRBE method. The error estimate of this approximation is given in Error analysis Section\ref{sec::analysis}.
We remark that we use two batch sizes $\widetilde{p}$ and $p$ for the short-range force and the long-range force, respectively.
These two sizes are determined by the specific systems and can be different. They are both $\mathcal{O}(1)$ constants.
Another remark is that the IRBE method brings in an additional variance in the force term, leading to the numerical heating effect.
Therefore, at the moment, this method is not suitable for long time simulations under the microcanonical ensembles,
similar to the previous work \cite{jin2021random,liang2021random}. One shall develop an appropriate symplectic scheme
for the time integration of the equations of motion, which remains an open problem for our stochastic algorithms. Fortunately,
it is practical for the NVT and NPT ensembles with the use of thermostats and barostats.


In practice, to improve the efficiency, the simulation box is divided into uniform cells of edge $r_{\eta}$,
and the particle list in each cell is built.
For a given particle $i$, the core list is constructed from the nearest $27$ cells containing particles with
distance less than $r_{\eta}$.
Then, a stencil of cells, i.e., a combination of all neighboring particles of $i$ into uniformly sized cells of width $r_{\eta}$, is used for constructing the shell list.
In other words, the stencil is a larger cubic box comprised of $\left(2\lfloor r_\mathrm{c}/r_{\eta}\rfloor+3\right)^3$ cells where particle $i$ is located at the central cell.
It is feasible to apply the RBL method directly within this large stencil, avoiding the filter step for all particles.
Then the core and shell neighbor sets are constructed and the corresponding forces $\bm{\mathcal{F}}_i^{core}$ and $\bm{\mathcal{F}}_i^{shell,*}$ can be calculated.
Two things need to be remarked.
First, in practice, the edge of the cell can be slightly larger than $r_{\eta}$ so that updating the neighbor list is not required at every step\cite{thompson2021lammps}.
Second, depending on the interatomic potentials, multiple neighbor lists and stencils with different attributes may be needed.
One example is a solvated colloidal system with large
colloidal particles where colloid/colloid, colloid/solvent, and solvent/solvent interaction cutoffs can be dramatically different\cite{in2008accurate,howard2016efficient}.
The procedure of the IRBE method is summarized in Algorithm \ref{algo:IRBE}.

It is stressed that the storage consumption and computational complexity of the RBL are all relatively small, in comparison to
the classical direct truncation method with the Verlet list approach.
Without loss of generality, suppose that the particles are uniformly distributed with $\rho$ be the average particle density.
The calculation complexity and the CPU memory usage for storing the neighbor list per particle in the classical method is $\mathcal{O}(4\pi r_\mathrm{c}^3\rho/3)$, whereas the RBL reduce these cost to $\mathcal{O}(4\pi r_{\eta}^3\rho/3+\widetilde{p})$.
If one safely adopts $r_\mathrm{c}/r_{\eta} = 2.5$ and an appropriate $\widetilde{p}$, both the storage saving and the speedup
have about an order of magnitude improvement for the short-range interactions.
Regarding the Fourier space, the computational complexity of the RBE method is only $\cO(pN)$, where $p$ is often chosen as a few hundred, and it avoids the multiple, massive global communications of the FFT calculation.
These will be further discussed from numerical results in Results and discussion Section \ref{sec::NumericalResults}.

\begin{algorithm}[t]
	\caption{(Improved random batch Ewald)}\label{algo:IRBE}
	\begin{algorithmic}[1]
		\State Input initial data: number of particles $N$, volume $V$, temperature $T$,  simulation steps $N_{tot}$, and the initial particle information $\left\{\bm{r}_i,\bm{v}_i, q_i\right\}$. Choose parameters $\alpha$, $r_\mathrm{c}$, $r_{\eta}$, $k_c$, and batch sizes $p$ and $\widetilde{p}$
		\State Generate enough samples of $\bm{k}\sim e^{-|\bm{k}|^2/4\alpha}$ ($\bm{k} \neq 0$) by the Metropolis-Hastings procedure to form a sample set $\mathcal{K}$
		\For {$n$ in $1:N_{tot}$ }
		\State Divide the system into small cells of side length $r_{\eta}$, transmit particle information to different nodes, and create the core neighbor lists
		\State For each particle $i$, randomly choose $\widetilde{p}$ particles from its stencils if $N_{\mathcal{S}}>\widetilde{p}$, otherwise choose all particles
		\State Calculate short-range force $\bm{\mathcal{F}}_i^{core}$ with particles in the core region
		\State Calculate the stochastic force $\bm{\mathcal{F}}_i^{shell,*}$ in the shell zone by Eq.\eqref{eq:3} and the correction force
		\State Calculate the short-range forces $\bm{\mathcal{F}}_i^{*}\approx \bm{\mathcal{F}}_i^{core}+\bm{\mathcal{F}}_i^{shell,*}-\bm{\mathcal{F}}^{corr}$
		\State Choose $p$ samples from set $\mathcal{K}$ and calculate the long-range force $\bm{F}_i^{long,*}$ by Eq.\eqref{eq:RBEFF}
		\State Calculate the total non-bonding force $\bm{F}_{i}^{tot,*}=\bm{F}_i^{long,*}+\bm{\mathcal{F}}_i^{*}$
		\State Integrate Newton's equations with suitable integrate scheme and thermostat
		\EndFor
	\end{algorithmic}
\end{algorithm}

\subsection{Error analysis}\label{sec::analysis}
We conduct some analysis and discussions on the IRBE method to demonstrate its validity. Let $\bm{F}^{tot}_i$ be
the total non-bonded force on particle $i$, and $\bm{F}^{tot,*}_i$ is its approximation by the IRBE method.
Let $\bm{\chi}_{i}=\bm{F}^{tot}_i-\bm{F}^{tot,*}_i$ be the deviation of the approximate force.
It is obvious by following the proof in literature \cite{jin2021random,liang2021random} that the expectation of $\bm{\chi}_i$ is zero, i.e., $\bm{F}^{tot,*}_i$ is an unbiased estimator. The variance of the force approximation can be written as
\begin{equation}\label{eq::var}
\var[\bm{\chi}_{i}]=\var[\bm{\mathcal{F}}_i^{*}]+\var[\bm{F}_i^{long,*}]
\end{equation}
because the random mini-batches for approximating the short-range and the long-range force are mutually independent.

It is noted that the estimate of the long-range force has been obtained under the Debye--H\"uckel assumption \cite{jin2021random},
\begin{equation}\label{eq::long}
\var[\bm{F}_i^{long,*}]\leq \dfrac{CN^{4/3}}{pL^4}.
\end{equation}
Combining this result, we have Theorem \ref{thm:varshort} for the estimate for the total force.

\begin{theorem}\label{thm:varshort}
	Under the assumption of the Debye--H\"uckel theory for particle distribution, the variance of the difference between approximated and exact non-bonded force holds the following estimate:
	\begin{equation}\label{eq::varestimate}		
	\var[\bm{\chi}_{i}]\leq\dfrac{C(N_{\mathcal{S}}-\widetilde{p})}{\widetilde{p}}\rho
	\left[r_{\eta}^{-11}+\erfc\left(\sqrt{2\alpha}r_{\eta}\right)\right]+\dfrac{CN^{4/3}}{pL^4}.
	\end{equation}
\end{theorem}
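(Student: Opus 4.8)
The plan is to work from the variance decomposition \eqref{eq::var}. The long-range piece is already controlled by the Debye--H\"uckel estimate \eqref{eq::long}, so the theorem reduces to bounding $\var[\bm{\mathcal{F}}_i^{*}]$. Writing $\bm{\mathcal{F}}_i^{*}=\bm{\mathcal{F}}_i^{core}+\bm{\mathcal{F}}_i^{shell,*}-\bm{\mathcal{F}}^{corr}$, the core force is deterministic (hence contributes nothing to the variance), while $\bm{\mathcal{F}}^{corr}=N^{-1}\sum_j(\bm{\mathcal{F}}_j^{core}+\bm{\mathcal{F}}_j^{shell,*})$ is an average of $N$ terms; since the per-particle shell batches are drawn independently, both $\var[\bm{\mathcal{F}}^{corr}]$ and $\mathrm{Cov}[\bm{\mathcal{F}}_i^{shell,*},\bm{\mathcal{F}}^{corr}]$ are $\cO(1/N)$ multiples of $\max_j\var[\bm{\mathcal{F}}_j^{shell,*}]$ and are absorbed into the $\lesssim$. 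Thus the first step is to reduce the statement to an estimate of $\var[\bm{\mathcal{F}}_i^{shell,*}]$ alone.

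Second, I would view $\bm{\mathcal{F}}_i^{shell,*}=(N_{\mathcal{S}}/\widetilde{p})\sum_{j\in\mathcal{B}(i)}\bm{f}_{ij}$ as the Horvitz--Thompson estimator attached to drawing a uniformly random $\widetilde{p}$-subset $\mathcal{B}(i)$ without replacement from the $N_{\mathcal{S}}$-element set $\mathcal{S}(i)$. The classical variance identity for sampling without replacement, applied coordinatewise and summed, gives
\[
\var[\bm{\mathcal{F}}_i^{shell,*}]=\frac{N_{\mathcal{S}}(N_{\mathcal{S}}-\widetilde{p})}{\widetilde{p}(N_{\mathcal{S}}-1)}\sum_{j\in\mathcal{S}(i)}|\bm{f}_{ij}-\bar{\bm{f}}_i|^2,\qquad \bar{\bm{f}}_i:=\frac{1}{N_{\mathcal{S}}}\sum_{j\in\mathcal{S}(i)}\bm{f}_{ij}.
\]
Bounding the population variance by the raw second moment, $\sum_j|\bm{f}_{ij}-\bar{\bm{f}}_i|^2\le\sum_j|\bm{f}_{ij}|^2$, and using $N_{\mathcal{S}}/(N_{\mathcal{S}}-1)\le 2$, one obtains $\var[\bm{\mathcal{F}}_i^{shell,*}]\lesssim \frac{N_{\mathcal{S}}-\widetilde{p}}{\widetilde{p}}\sum_{j\in\mathcal{S}(i)}|\bm{f}_{ij}|^2$, which already produces the prefactor in \eqref{eq::varestimate}.

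Third, it remains to prove $\sum_{j\in\mathcal{S}(i)}|\bm{f}_{ij}|^2\lesssim \rho\,[\,r_\eta^{-11}+\erfc(\sqrt{2\alpha}r_\eta)\,]$. Here I would invoke the Debye--H\"uckel hypothesis to bound the conditional particle density in the shell by a constant multiple of the bulk density $\rho$, turning the sum into the radial integral $\rho\int_{r_\eta}^{r_c}|\bm{f}(r)|^2\,4\pi r^2\,dr$, with the finitely many values of $q_iq_j,\epsilon_{ij},\sigma_{ij}$ absorbed into constants. Splitting $|\bm{f}_{ij}|^2\le 2|\bm{f}_{ij}^{short}|^2+2|\bm{f}_{ij}^{lj}|^2$ in \eqref{eq::short1}: in the shell window $r\ge r_\eta\gg\sigma_{ij}$ the LJ force is dominated by its dispersion tail, $|\bm{f}_{ij}^{lj}|\lesssim r^{-7}$, so $\int_{r_\eta}^{\infty}r^{-14}\,r^{2}\,dr\asymp r_\eta^{-11}$; and the elementary bound $\erfc(x)\le e^{-x^2}$ gives $|\bm{f}_{ij}^{short}|^2\lesssim e^{-2\alpha r^2}(r^{-4}+r^{-2})$, so that $\int_{r_\eta}^{\infty}e^{-2\alpha r^2}(r^{-2}+1)\,dr\lesssim\erfc(\sqrt{2\alpha}r_\eta)$ after the substitution $u=\sqrt{2\alpha}\,r$. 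Adding the two radial contributions gives the claimed bound on $\var[\bm{\mathcal{F}}_i^{shell,*}]$, and combining with \eqref{eq::long} yields \eqref{eq::varestimate}.

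The step I expect to be the main obstacle is the discrete-to-continuum passage $\sum_{j\in\mathcal{S}(i)}|\bm{f}_{ij}|^2\lesssim \rho\int_{\mathcal{S}}|\bm{f}|^2$: making it rigorous requires a uniform-in-$i$ upper bound on the conditional density around each particle (no clustering), which is precisely what the Debye--H\"uckel assumption supplies, exactly as it does for \eqref{eq::long}; in addition one has to handle the $r^{-14}$ singularity of the LJ integrand near $r=r_\eta$ and check that, in the physically relevant window $r_\eta\gg\sigma_{ij}$, the dispersion term — not the $r^{-13}$ repulsive term — sets the $r_\eta^{-11}$ rate. The remaining point, showing that the momentum-correction force is harmless at the variance level, is routine once the independence of the per-particle batches is used.
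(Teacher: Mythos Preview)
Your proposal is correct and follows essentially the same route as the paper: decompose the variance as in \eqref{eq::var}, use the sampling-without-replacement prefactor $(N_{\mathcal S}-\widetilde p)/\widetilde p$, replace $\sum_{j\in\mathcal S(i)}|\bm f_{ij}|^2$ by the radial integral $\rho\int_{r_\eta}^{\infty}|\bm f(r)|^2 4\pi r^2\,dr$ under the Debye--H\"uckel hypothesis, and evaluate the LJ and short-range Ewald pieces to get $r_\eta^{-11}$ and $\erfc(\sqrt{2\alpha}\,r_\eta)$ respectively, combining with \eqref{eq::long}. You are actually more explicit than the paper on two points---the paper imports the random-batch variance bound \eqref{eq::short} from references rather than deriving it via the Horvitz--Thompson identity, and it does not separately discuss the $\cO(1/N)$ contribution of $\bm{\mathcal F}^{corr}$---but the underlying argument is the same.
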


\begin{proof}
	By following the results given in Refs.\cite{liang2021random,jin2021random}, the short-range part can be estimated by
	\begin{equation}\label{eq::short}
	\var[\bm{\mathcal{F}}_i^{*}]\leq\frac{C(N_{\mathcal{S}}-\widetilde{p})}{N^{2} \widetilde{p}}\left[\sum_{i \neq j} (N^2-2N) \left(\bm{f}_{ij}\right)^{2}+\sum_{k} \sum_{k \neq j} \left(\bm{f}_{ij}\right) ^{2}\right],
	\end{equation}
	where the minimum image conventions are already included in the force term \cite{frenkel2001understanding}.
	The short-range force $\bm{f}_{ij}$ is comprised of the short-range Ewald and the LJ forces.
	We consider these two parts separately, i.e., the variance of the short-range force is estimated by,
	\begin{equation}\label{eq::estimate1}
	\var[\bm{\mathcal{F}}_i^{*}]\leq \var[\bm{F}_i^{short,*}]+\var[\bm{F}_{i}^{lj,*}].
	\end{equation}
	
	We first consider the short-range Ewald force
	\begin{equation}\label{eq::Coulomb}
	\bm{f}^{short}_{i}(\bm{r})=-q_i q_j \left[ \frac{\erfc(\sqrt{\alpha}r)}{r^2} +\frac{2\sqrt{\alpha}e^{-\alpha r^2}}{\sqrt{\pi}r} \right] \frac{\bm{r}}{r}
	\end{equation}
	between particle $q_i$ and $q_j$ with the distance vector $\bm{r}$. When $r$ is large, one has
	\begin{equation}
	\frac{\erfc(\sqrt{\alpha}r)}{r^2}=\frac{2}{\sqrt{\pi}}\dfrac{\int_{\sqrt{\alpha}r}^{\infty}e^{-u^2}du}{r^2}\leq C\frac{e^{-\alpha r^2}}{r^3}=o\left(\frac{e^{-\alpha r^2}}{r}\right),
	\end{equation}
	where $C$ is constant depending on $\alpha$. Thus the upper bound on the variance of the approximation of the short-range Ewald force can be derived by estimating the contribution from the second term in Eq.\eqref{eq::Coulomb}
	\begin{equation}\label{eq::variance1}
	\var[\bm{F}_i^{short,*}]\leq\frac{(N_{\mathcal{S}}-\widetilde{p})}{\widetilde{p}}\int_{r_{\eta}}^{\infty}4\pi\rho r^2\left|\frac{e^{-\alpha r^2}}{r}\right|^2dr\leq \frac{C(N_{\mathcal{S}}-\widetilde{p})}{\widetilde{p}}\rho \erfc(\sqrt{2\alpha}r_{\eta}).
	\end{equation}
	Applying the same approach to the variance of the LJ force, one obtains
	\begin{equation}\label{eq::variance2}
	\var[\bm{F}_{i}^{lj,*}]\leq \dfrac{C(N_{\mathcal{S}}-\widetilde{p})}{\widetilde{p}}\rho r_{\eta}^{-11}.
	\end{equation}
	
	By combing Eqs.\eqref{eq::estimate1}, \eqref{eq::variance1},  \eqref{eq::variance2} with \eqref{eq::long},
	we obtain \eqref{eq::varestimate} and complete the proof.
\end{proof}

By Theorem \ref{thm:varshort}, we can safely suppose that $\var[\bm{\chi}_i]$ is bounded by a constant $\xi$ at the sense of infinite norm. 
In the following, we derive that the equilibrium distributions simulated by the true force and the random force has a small error proportional to $C\sqrt{\xi\Delta t}$ for a constant $C$ only depending on the total simulation time $t^*$, and $\Delta t$ being  the time step.

Let us consider the configurations produced by the underdamped Langevin dynamcs for the equations of motion,
\begin{equation}\label{SDE}
\begin{aligned}
&d\bm{r}_i=\bm{v}_idt,\\
&md\bm{v}_i=(\bm{F}_i^{tot}-\gamma \bm{v}_i)dt+\sqrt{2D}d\bm{W}_i,
\end{aligned}
\end{equation}
where $D=\gamma k_\text{B} T$ with $\gamma$ being the friction coefficient and $\{\bm{W}_i\}_{i=1}^N$ being the independently identially distributed Wiener processes.For convenience,
one defines $\bm{R}=(\bm{r}_1,\cdots,\bm{r}_N)'$, $\bm{V}=(\bm{v}_1,\cdots,\bm{v}_N)'$ and $\bm{Y}=[\bm{R};~\bm{M}\bm{V}]$ with  $\bm{M}$ being the diagonal mass matrix. These are quantities from the true force $\bm{F}_i^{tot}$.  $\bm{R}^*$, $\bm{V}^*$ and $\bm{Y}^*$ are denoted by the quantities from the approximate force $\bm{F}_i^{tot,*}$. The Langevin equations \eqref{SDE} can be collectively written as,
\begin{equation} 
d\bm{Y}= \bm{G} (\bm{Y})dt+ \bm{\mathcal{D}} d\bm{\mathcal{W}}, 
~\hbox{and}~d\bm{Y}^*= \bm{G} (\bm{Y}^*)dt+ \bm{\mathcal{D}} d\bm{\mathcal{W}} 
\end{equation}
where  $\bm{G}$ are the velocity and force terms,
$\bm{\mathcal{D}}=[\bm{0}, \sqrt{2D} \bm{I}]'$ and $\bm{\mathcal{W}}$ is
high-dimensional Wiener processes.	Consider the dynamics in a time step
$t\in[t_k,t_{k+1}]$ with $t_k=k\Delta t$. One defines the differences
$\bm{\delta Y}=\bm{Y}(t)-\bm{Y}^*(t_k)$ and $\bm{\delta G}=\bm{G}(\bm{Y}(t))-\bm{G}(\bm{Y}^*(t_k))$. The following equation holds,  
\begin{equation}\label{Ito1}
\dfrac{d}{dt}\mathbb{E}|\bm{\delta Y}(t)|^2=2\mathbb{E}[\bm{\delta Y}(t)]
\cdot \bm{\delta G}(t).
\end{equation}
By using the It\^o's formula   together with the bound $\xi$ for the force variance  by Theorem \ref{thm:varshort}, one can obtain the estimate (similar to literature\cite{jin2020random,li2020random}), 
\begin{equation}
\dfrac{d}{dt}\mathbb{E}|\bm{\delta Y}(t)|^2\leq C_1\left(\mathbb{E}|\bm{\delta Y}(t_k)|^2+(1+D^2)\Delta t^2\right)+2\xi\Delta t,
\end{equation}
for any $t\in[t_k,t_{k+1}]$, where $C_1$ is a constant.   
By the Gronwall's inequality, one reaches the strong convergence error,
\begin{equation}\label{eq::strongerr}
\sqrt{\mathbb{E}\left[\frac{1}{N}\sum_{i}\left(|\bm{r}_i-\bm{r}_i^*|^2+|\bm{v}_{i}-\bm{v}_{i}^*|^2\right)\right]}\leq C(t^{*})\sqrt{\xi\Delta t}.
\end{equation}

The convergence analysis \eqref{eq::strongerr} indicates that our method is valid for capturing the finite time dynamics in spite of the random batch approximation of the force. It is remarked that our derivation is based on the Langevin dynamics, but we actually take the Nos\'e-Hoover thermostat for the NVT ensemble in the simulations of next section. A rigorous derivation for the convergence of Nos\'e-Hoover thermostat by the RBE approximation is an open issue.

\section{Results and discussion}\label{sec::NumericalResults}

In this section, we perform numerical results on three typical systems: primitive-model electrolyte solutions, all-atom bulk water systems, and LiTFSI ionic liquids at the NVT ensemble in order to validate the accuracy and efficiency of the proposed IRBE method. The results of two different methods are also performed for
comparison. One is the PPPM method\cite{hockney2021computer} which is a classical method for popular MD packages. The other is the RBE method which does not introduce
the RBL for the short-range interactions. The calculations are conducted in the LAMMPS \cite{plimpton1995fast,thompson2021lammps} (version 29Oct2020) with the implementation of the RBE and IRBE, and on the ``Siyuan Mark-I'' cluster at Shanghai Jiao Tong University, which comprises $936$ nodes with 2 $\times$ Intel Xeon ICX Platinum 8358 CPU ($2.6$GHz, $32$ cores) and $512$ GB memory per node. Our implementation is optimized with distributed-memory parallelism via MPI and Intel 512-bit SIMD (AVX-512 architecture) instruction.
The communication and vectorization procedures are
described below in details.

A 3D domain decomposition strategy is coupled with for the calculation of real-space forces. The simulation box is spatially decomposed (partitioned) into non-overlapping subdomains which fill the box. An unique MPI rank or process is assigned to each subdomain, such that the computational tasks assigned to each rank are even out as far as possible. The 3D decomposition framework is of great important in modern MD software and has been attracted many interests \cite{thompson2021lammps,abraham2015gromacs,pall2013flexible}. We follow the procedure described in Ref.\cite{thompson2021lammps} for the dynamic load-balanced partitioning and the ghost-atom communication, whereas the construction of neighbor lists uses the strategy provided in the Improved RBE method Section \ref{sec::rbl} in this paper.

For the long-range force, owing to the random batch idea in the Fourier space, a serial importance sampling procedure and a global broadcast operation seems to be required at each MD step. We note that this cost can be significantly reduced by using parallel sampling strategy\cite{liang2022superscalability}. The samples and the positions of particles are packaged into 512-bit vectors when the structure factors $\rho(\bm{k})$ are evaluated using the local atoms of each MPI rank. Only one global operation, MPI$\_$Allreduce, is required for reducing $\rho(\bm{k})$. The approximated force $\bm{F}_{i}^{long,*}$ of each particle are then obtained from the structure factors.

\subsection{Electrolyte solution}
To demonstrate the performance of our approach, we first perform  MD simulations of simple 1:1 and 2:1 electrolytes in the canonical ensemble.
The electrolytes are described by the primitive model where ions are immersed in a continuum solvent
and represented as soft spheres that interact via a shifted-truncated LJ potential and electrostatic interactions.
The temperature is maintained by using a Langevin thermostat.
The simulation proceeds with velocity-Verlet scheme. In each simulation, we perform $5\times 10^5$ time steps for the equilibrium phase and $6\times 10^5$ time steps for the statistics.

The system includes $2560$ monovalent anions and $2560$ monovalent cations in 1:1 electrolyte, while $1280$ divalent cations and $2560$ monovalent anions in 2:1 electrolyte, and all quantities are provided in reduced units.
We fix the diameter of particles as $\sigma=1$, and the side length of the simulation box is $L=80\sigma$.
The MD time step is set as $\tau=0.002t_0$, where $t_0=\sigma\sqrt{m_0/k_B T}$ is the unit of time with the particle mass $m_0=1$.
The relaxation time is set to be $\gamma=1.0$ for the Langevin thermostat.

We first examine the accuracy by calculating the radial distribution function (RDF) between ions of different species and the mean-square displacement (MSD) of different kinds of ions.
The RDFs of atom pairs, denoted by $\textsl{g}_{++}$, $\textsl{g}_{-+}$ and so on, furnish the spatial arrangement of the electrolyte system and the MSD shows uniform linear motion in short time and diffusion behavior in long time of particles.
We use the PPPM method with $10^{-4}$ relative accuracy and cutoff $r_\mathrm{c}=15\sigma$ as the reference result.
In the IRBE, we set $r_{\eta}=5\sigma$ and batch size $\widetilde{p}=10$ for the short-range force, and $\alpha=0.03$ (the same as for the PPPM) and batch size $p=100$ (the same as for the RBE) for the long-range force.
The results are displayed in Fig.~\ref{fig:rdfmsd}, showing the three methods are almost overlapping in both the RDF and MSD curves,
demonstrating that the IRBE reproduces both structural and dynamical properties of the RBE and PPPM methods.
During the simulations, we compute the average number of neighbors per atom. There are $\sim 140$ neighbors are for both the PPPM and the RBE, whereas only $\sim 10$ of them are inside the core region, indicating that the IRBE significantly reduces the number of neighbors,
and thus it will be promising to save the memory and CPU costs, as is shown later on. In Fig.~\ref{fig:energy}, the potential energy per atom of every $100$ time steps are plotted by using different methods. The mean and standard deviation value are listed in Table.~\ref{tabl:ene}. The relative errors of the IRBE method have minor differences in comparison with the PPPM and the RBE.

\begin{figure*}[ht]	
	\centering
	\includegraphics[width=0.48\textwidth]{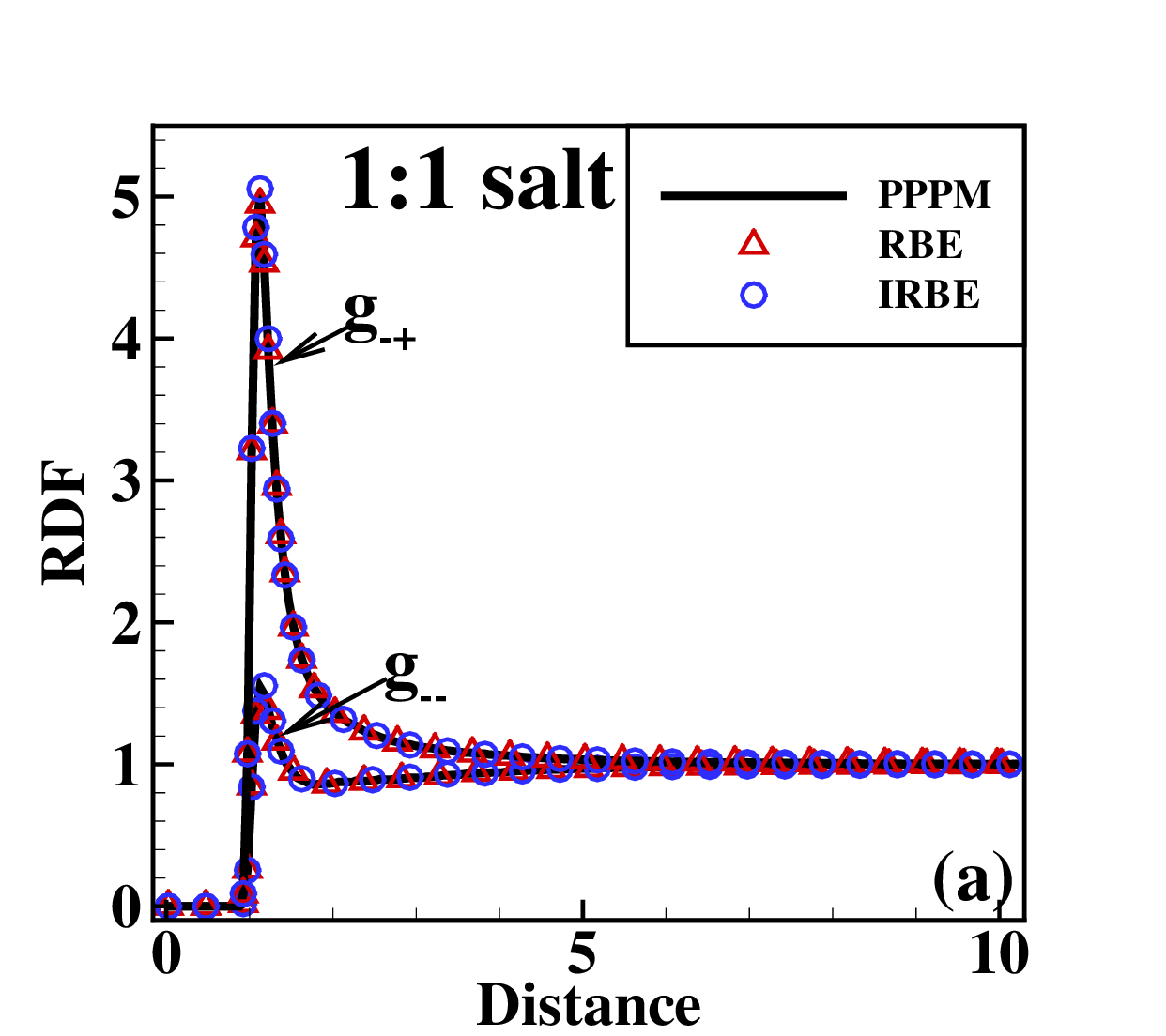}
	\includegraphics[width=0.48\textwidth]{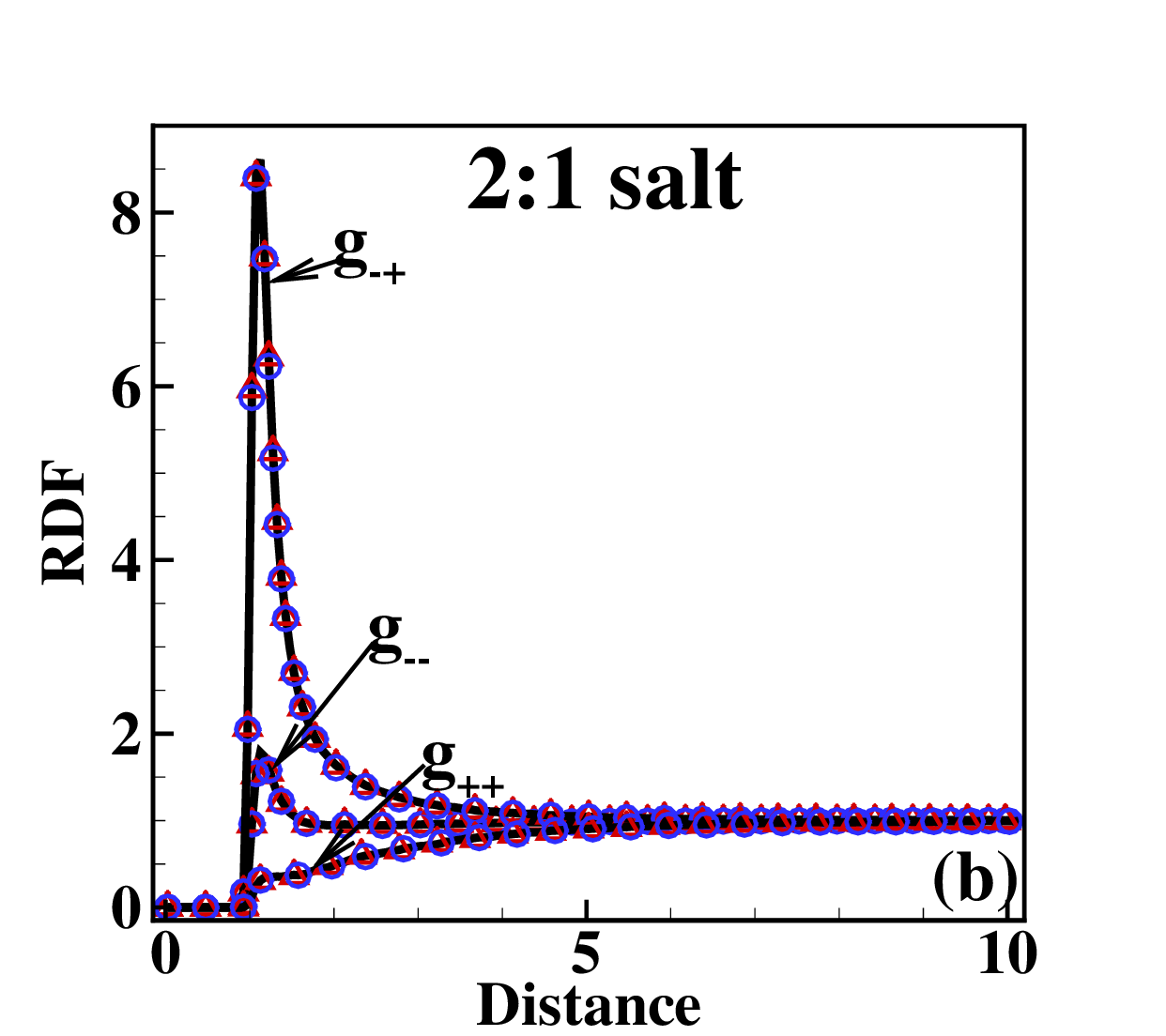}
	\includegraphics[width=0.48\textwidth]{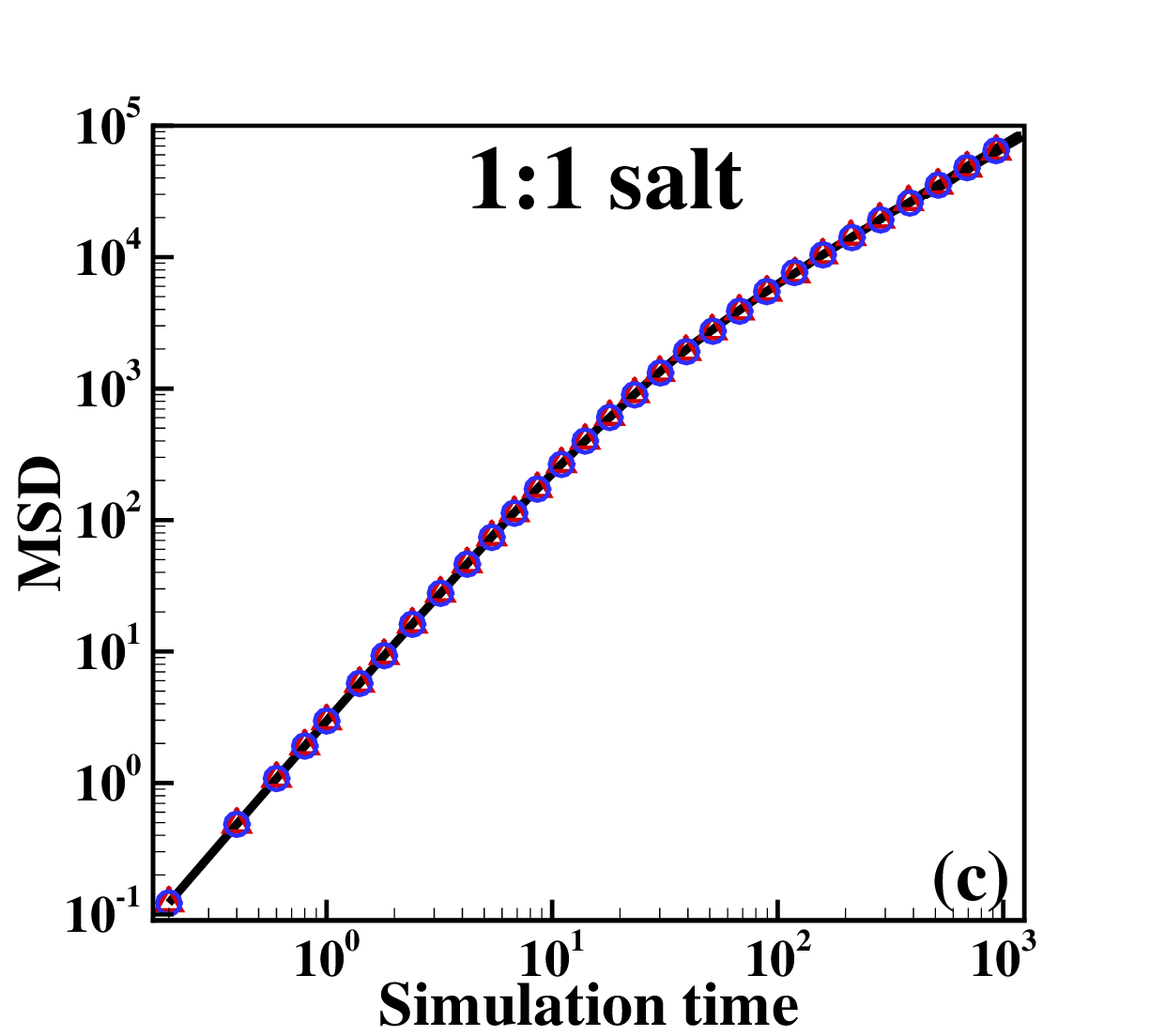}
	\includegraphics[width=0.48\textwidth]{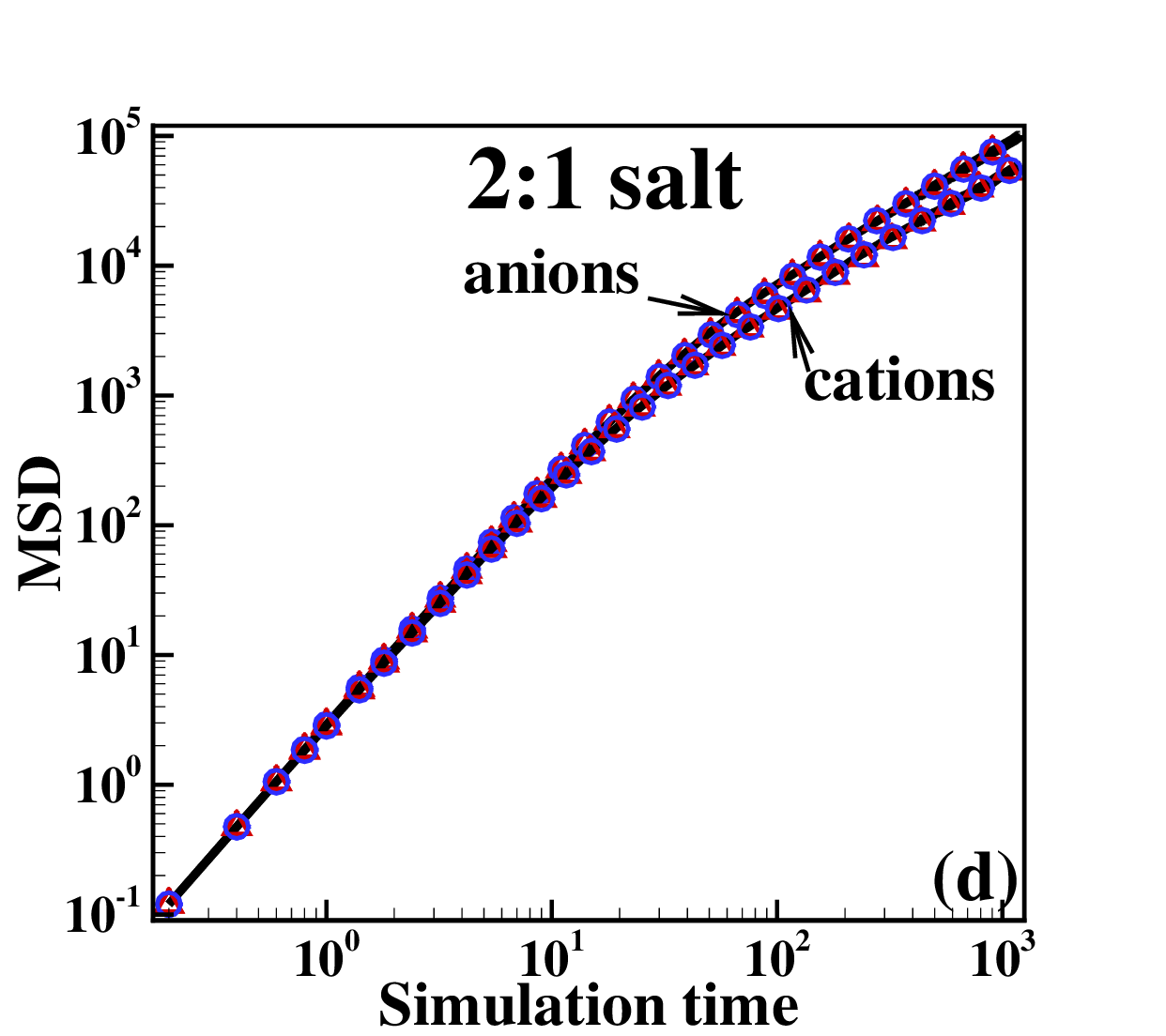}
	\caption{RDFs and MSDs of the 1:1 and 2:1 electrolytes calculated by three different methods.}
	\label{fig:rdfmsd}
\end{figure*}

\begin{figure*}[ht]	
	\centering
	\includegraphics[width=0.45\textwidth]{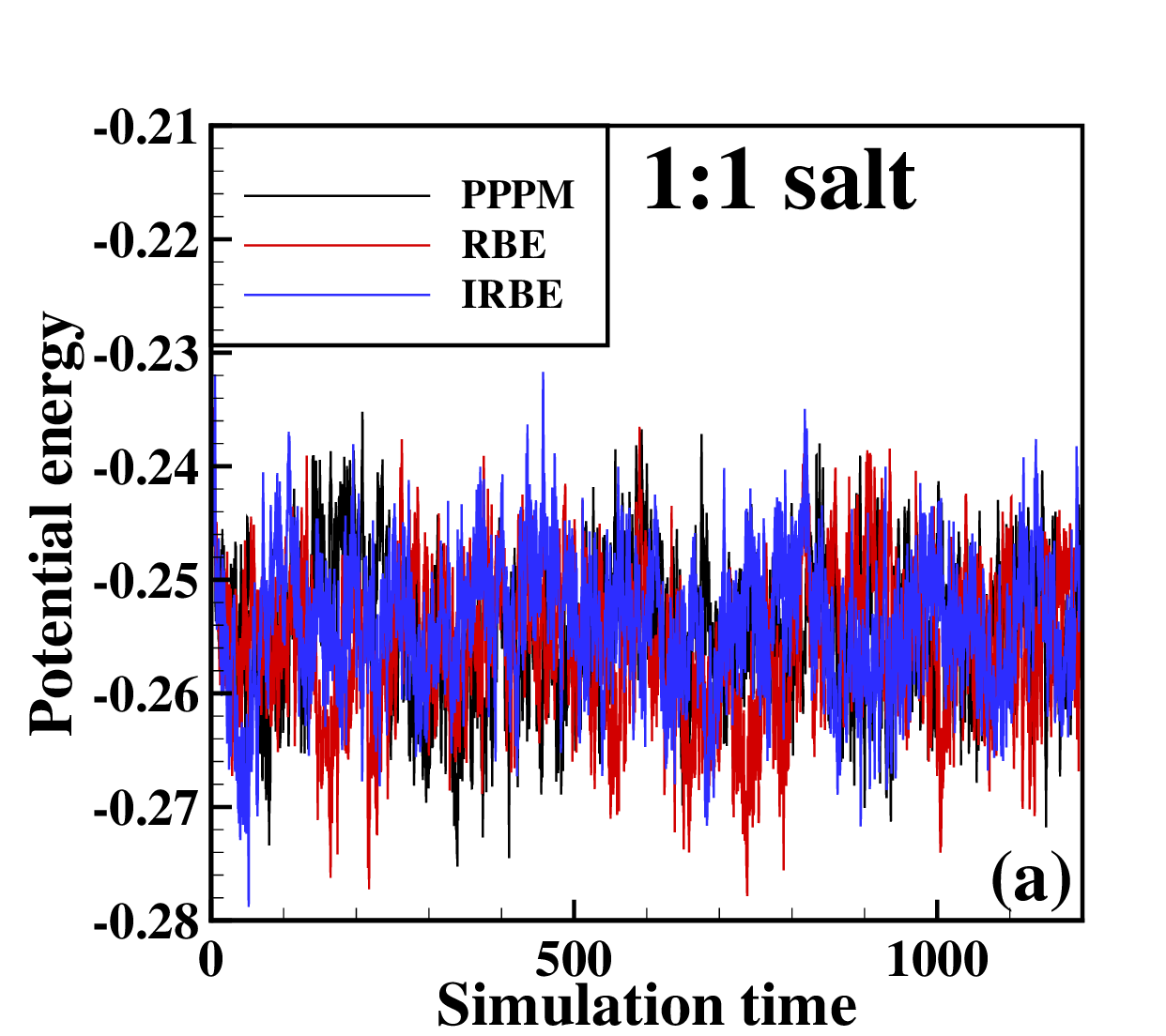}
	\includegraphics[width=0.45\textwidth]{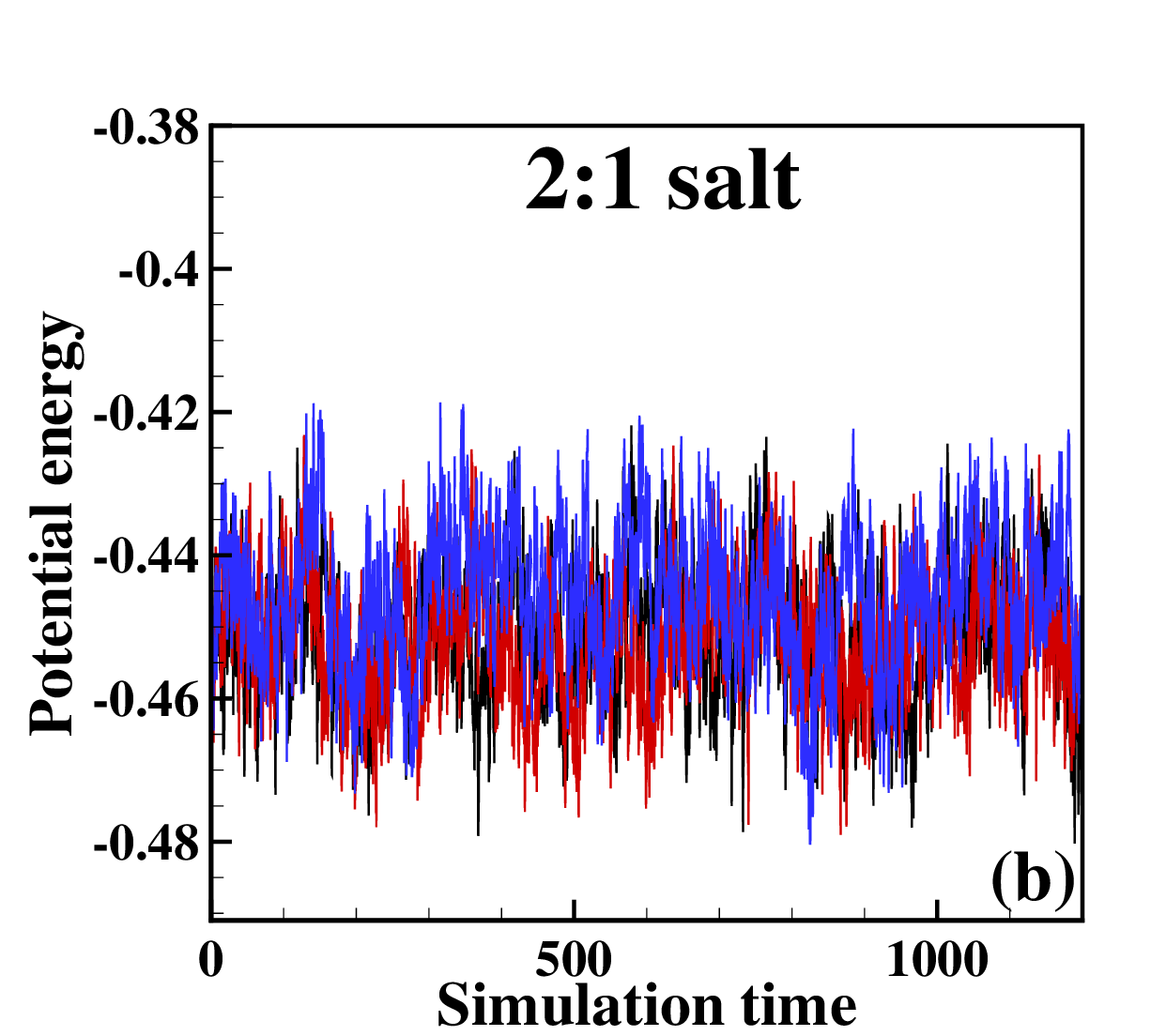}
	\caption{Potential energy per atom calculated by using the PPPM, RBE and the IRBE methods.}
	\label{fig:energy}
\end{figure*}

\renewcommand\arraystretch{1.4}
\begin{table}[ht]
	\centering
	\caption{Mean value (mean) and standard deviation (std) of the potential energy per atom displayed in Figure \ref{fig:energy}.}\label{tabl:ene}
	\setlength{\tabcolsep}{10mm}{
		\begin{tabular}{cccc}
			\hline\hline
			&PPPM&RBE&IRBE\\\hline
			$1:1$ mean&$-0.2544$&$-0.2559$&$-0.2538$\\\hline
			$1:1$ std&$0.005968$&$0.006289$&$0.005885$\\\hline
			$2:1$ mean&$-0.4505$&$-0.4516$&$-0.4467$\\\hline
			$2:1$ std&$0.008918$&$0.008393$&$0.009519$\\\hline\hline
	\end{tabular}}
\end{table}

We also test the CPU time of the algorithms by varying the system size while maintaining the particle density $N/L^3=0.01$.
We measure the time cost per step and the results are shown in Fig.~\ref{fig:timesalt},
where the number of particles takes from $6\times 10^5$ to $4\times 10^7$ by using $960$ CPU cores.
It can be observed that the simulation time of the IRBE has a linear scaling with the number of particles,
and the computational efficiency by using the IRBE method is improved by a factor of 2 compared to the RBE,
and a factor of 6 compared to the PPPM.
\begin{figure*}[ht]	
	\centering
	\includegraphics[width=0.45\textwidth]{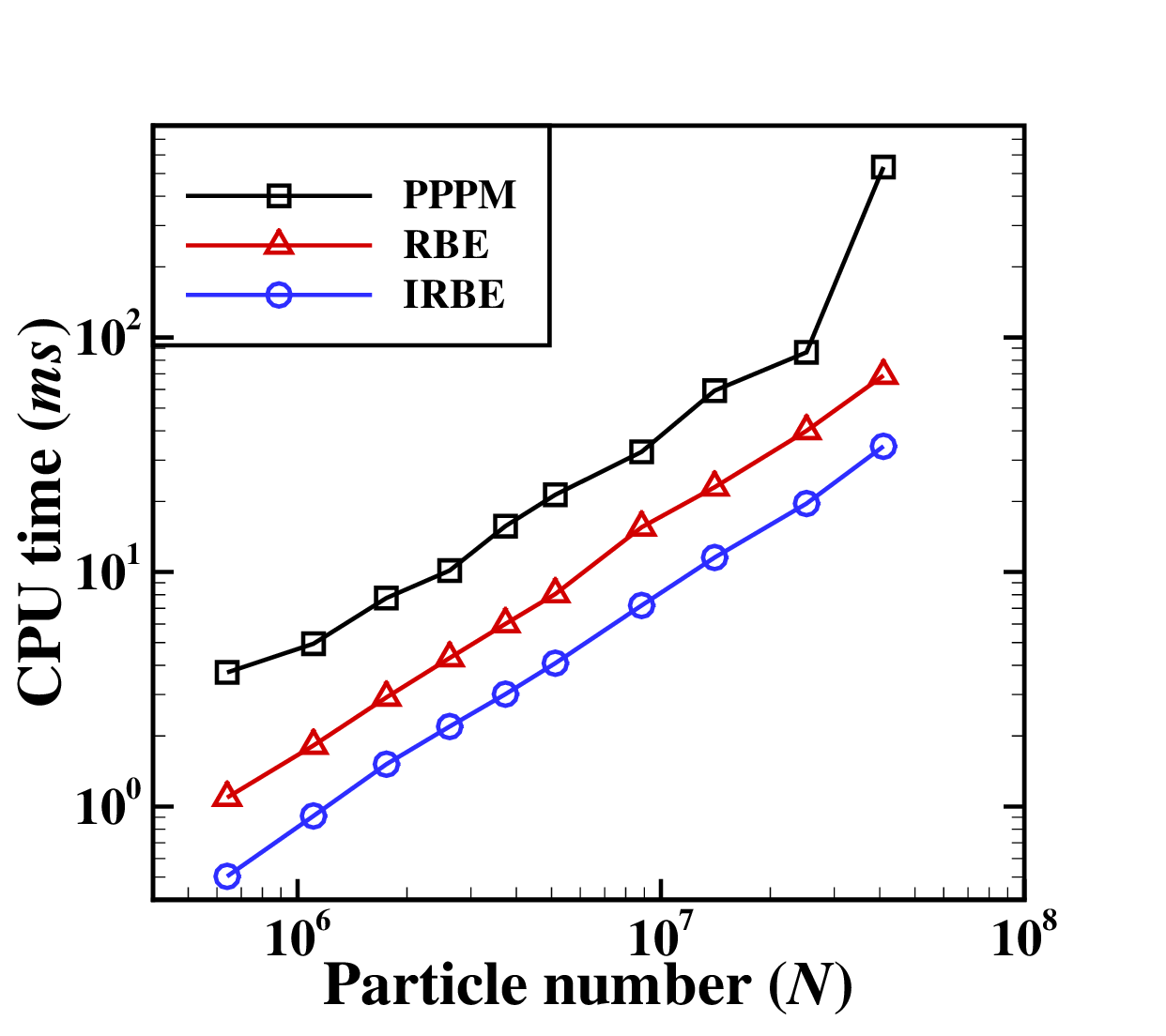}
	\caption{CPU time performance per timestep of the monovalent salt solution by using the PPPM, RBE and IRBE methods.}
	\label{fig:timesalt}
\end{figure*}

\subsection{All-atom water systems}
For the second benchmark problem, we conduct all-atom MD simulations of bulk water systems by
using the SPC/E force field\cite{mark2001structure}.
For validating the accuracy, a system consisting of $901$ water molecules is used with simulation box
of the side length $L=3 nm$. The Nos\'e-Hoover thermostat is applied with the coupling parameter $\gamma=1.0$ and temperature $T=298K$.
In each simulation, we perform $5\times 10^5$ time steps for equilibrium and another $6\times 10^5$ steps for statistics.
We choose the accuracy $10^{-4}$ in the PPPM method as reference.
For the IRBE, we set parameters $(r_\mathrm{\eta},\widetilde{p})=(0.7nm,100)$ for the short-range interactions,
and $(\alpha,p)=(0.09,200)$ for the long-range interactions. The setup balances the calculation time of these two parts and is near-optimal for time cost.
Table.~\ref{tabl:para} list other groups of parameters which can achieve similar accuracy (with the error of RDF, MSD and energy is in $<1\%$ error threshold) by our extensive simulations, which shall be useful to provide guidance for general all-atom simulations.
All the different parameter sets given in Table.~\ref{tabl:para} can produce accurate simulation results due to that variance of the stochastic force is controlled, demonstrating the insensitivity of the IRBE to these parameters.
With the increase of $\alpha$, the force variance in Fourier space becomes larger and we need to increase $p$.
At the same time, the error complementary function decreases, so one can choose smaller $(r_\mathrm{\eta},\widetilde{p})$. These parameters are determined empirically and are difficult to be given theoretically. Some valuable criteria may be related to the practical choice of parameters, e.g., the symmetry-preserving mean-field condition presented recently\cite{hu2022symmetry}. We will report further demonstrations of these criteria in our subsequent work. Fig.~\ref{fig:water} presents the RDFs and MSDs produced by different methods between oxygen atoms.
These data vadiate the accuracy of the IRBE method as it well reproduces the result of the other two methods.

\begin{figure*}[ht]	
	\centering
	\includegraphics[width=0.45\textwidth]{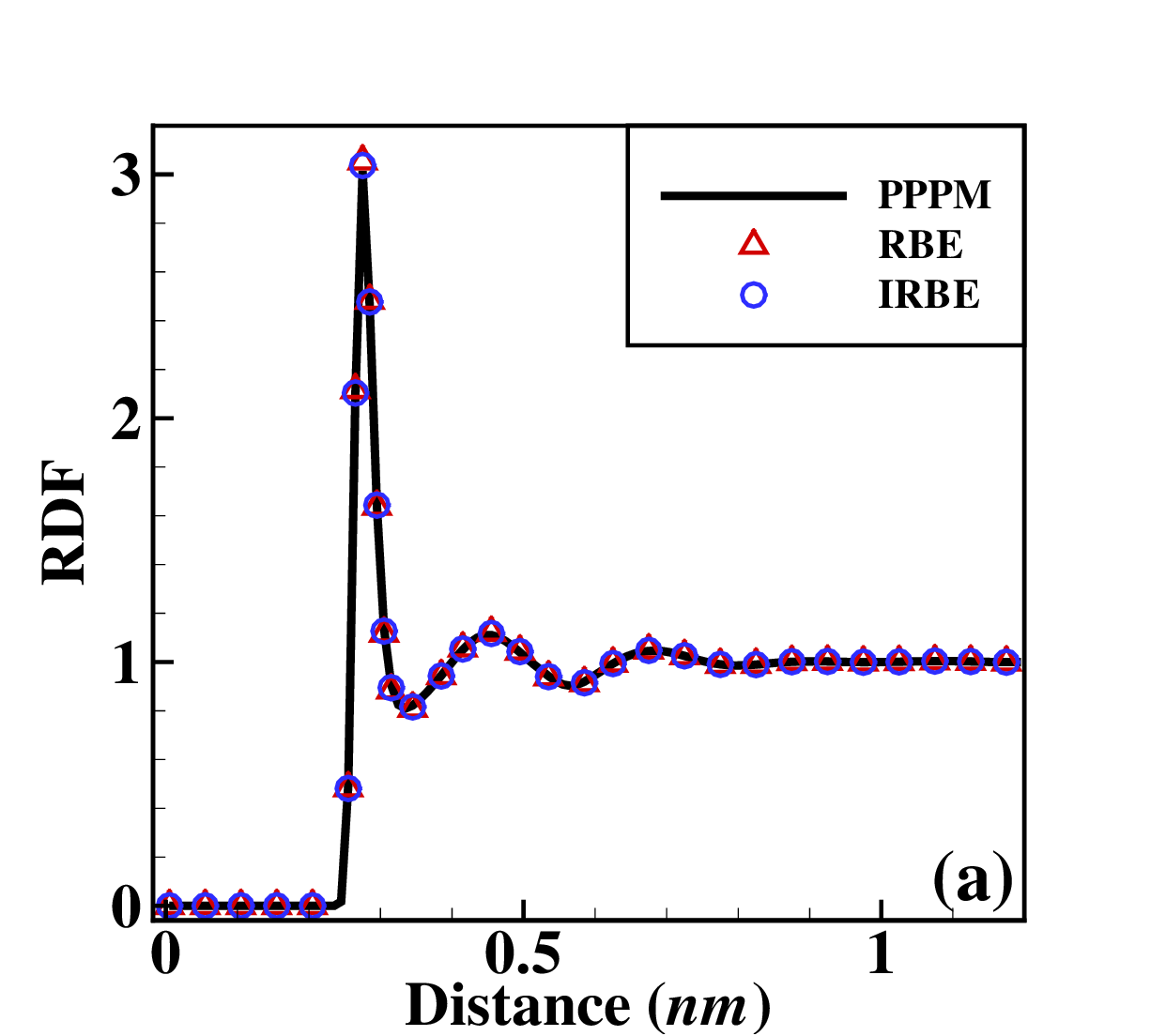}
	\includegraphics[width=0.45\textwidth]{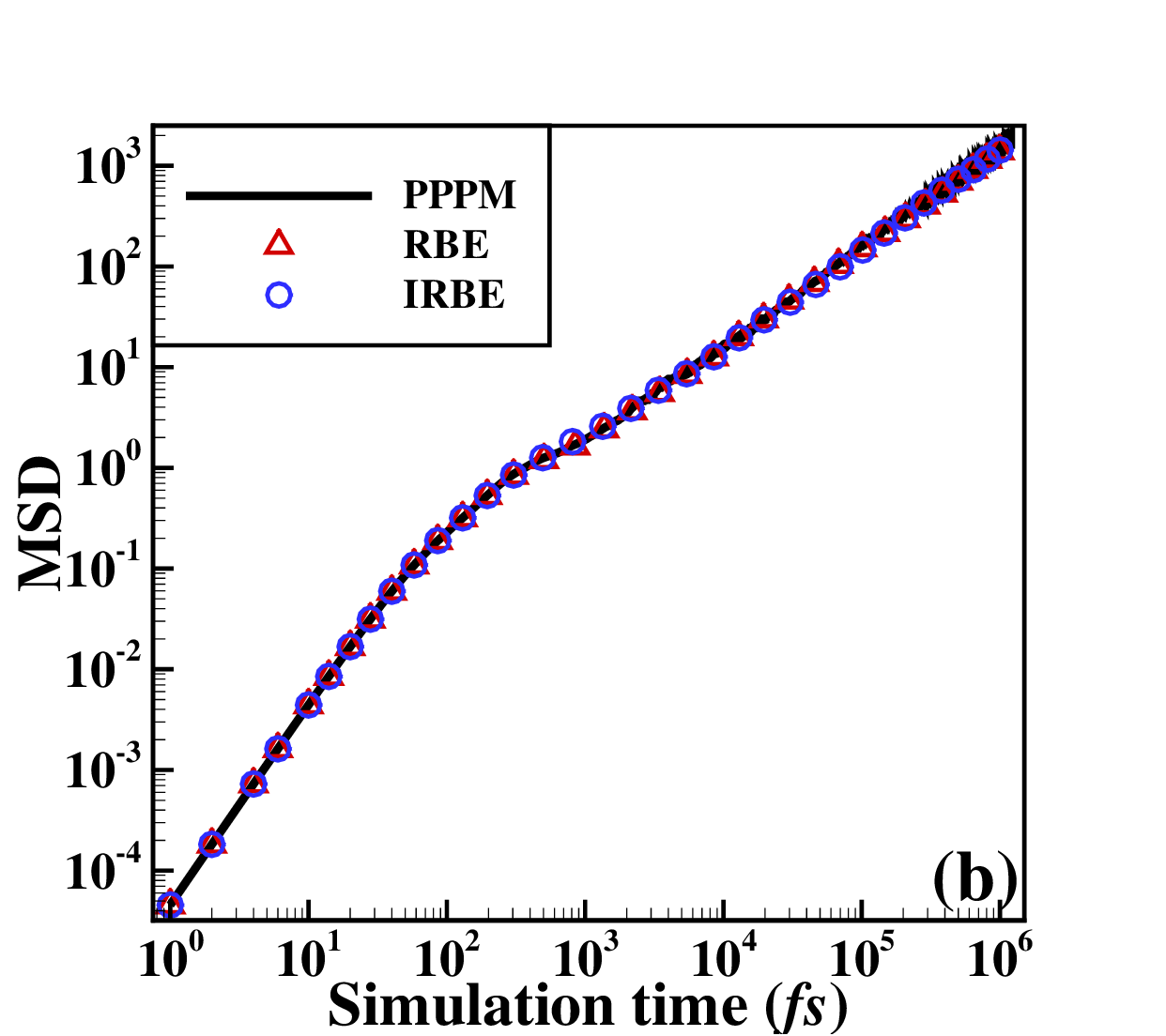}
	\caption{RDFs between oxygen atoms and MSDs of oxygen atom calculated by using the PPPM, RBE and IRBE methods for the bulk water system.}
	\label{fig:water}
\end{figure*}

\begin{table}[ht]
	\caption{The choice of parameters in the IRBE method for the bulk water system, where the result of
		RDF, MSD and energy is in $<1\%$ error threshold. (Unit of length: $nm$) }
	\centering
	\setlength{\tabcolsep}{10mm}{
		\begin{tabular}{ccc}
			\hline\hline
			\multicolumn{2}{c}{Fourier ~~space}& Real space ($r_\mathrm{c}=1.2nm$) \\
			$\alpha$ &$p$ &  $(r_\mathrm{\eta},\widetilde{p})$  \\\hline
			\midrule
			0.10&300&(0.6,~100) and (0.7,~~50)\\
			0.09&200&(0.6,~150) and (0.7,~100)\\
			0.08&200&(0.6,~200) and (0.7,~100)\\
			0.07&200&(0.6,~300) and (0.7,~200)\\
			0.06&100&(0.6,~400) and (0.7,~300)\\
			0.05&100&(0.6,~500) and (0.7,~400)\\
			\hline\hline
		\end{tabular}
	}
	\label{tabl:para}
\end{table}

We then study the CPU performance with the same algorithm parameters by taking a large size of water systems 
with $10^8$ water molecules and varying the number of CPU cores from 1000 to $50000$.
The results are shown in Fig.~\ref{fig:timewater}.
The CPU time per step of the IRBE method has linearly decrease with respect to the number of CPU cores,
while the parallel efficiency of the PPPM method decrease when the core number exceeds $10000$.
The decrease of the scalability of the PPPM can be understood as the increase of cost in CPU communication.
In comparison to the RBE, the IRBE method further improves the computational efficiency due to the acceleration
in calculating short-range interactions.
\begin{figure*}[ht]	
	\centering
	\includegraphics[width=0.45\textwidth]{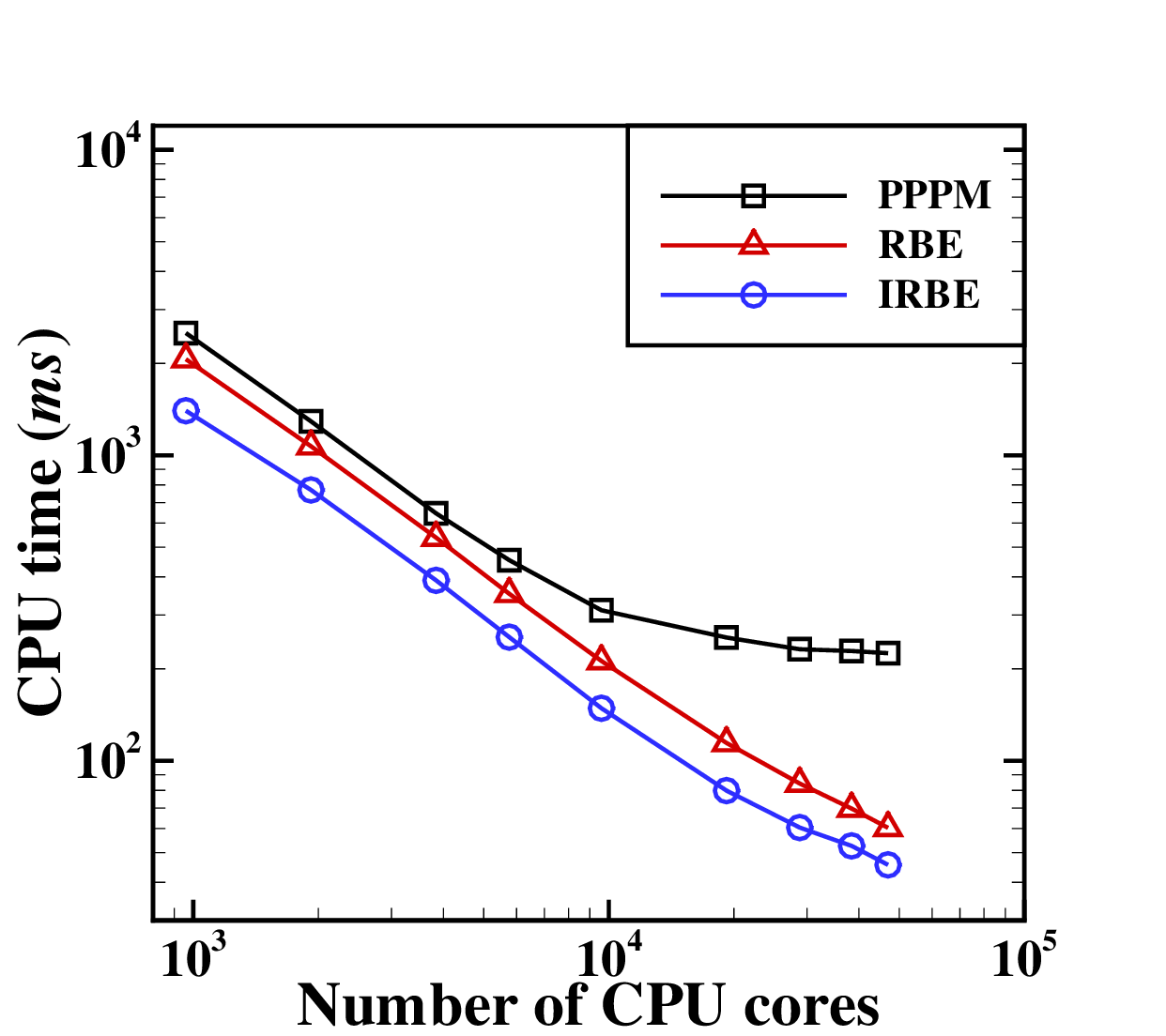}
	\caption{CPU cost per step of the bulk water system by using the PPPM, RBE and IRBE methods.}
	\label{fig:timewater}
\end{figure*}

\subsection{All-atom LiTFSI ionic liquid}
The third example is an all-atom system of a LiTFSI electrolyte of ultrahigh concentration ($5M$) with a cubic simulation box of initial size $5.67 nm$ including $320$ Li$^+$, $320$ TFSI$^{-}$, and $3561$ water molecules. This system has been previously studied \cite{liang2022superscalability} by the RBE. The high ionic concentration results in a strongly heterogeneous equilibrium distribution of the electrolyte due to the phase separtion (water versus anions) of length scale $1-2nm$. The system employs the optimized potential for all-atom force fields for Li$^+$  and  for TFSI$^-$, together with the TIP3P model for water molecules. The system is first equilibrated
in the NPT ensemble with the PPPM at $298$ $K$ and $1$ $bar$ for $500$ $ps$, followed by $500$ $ps$ production MD in the NVT using the Nos\'e-Hoover thermostat with the PPPM, RBE and IRBE, respectively. The cutoff radius of the short-range Coulomb interaction of the PPPM and
LJ interaction is $1.2nm$ with the splitting parameter $\alpha=0.05$, such that the relative error is about $10^{-4}$, which acts as the reference solution. Note that the result of the RBE method with $r_c=1.2 nm$ predicts the the same curve as the PPPM and the data is not present.

To compare the performance, we set up two simulations with parameters $(\alpha,p)=(0.05,500)$ and the short-range cutoff radius taking $0.8nm$ for the RBE and the IRBE, and the IRBE has the batch size $\widetilde{p}=100$ for the neighbor list. The sizes of neighbor lists for the RBE and IRBE are $\sim 200$ and $300$, respectively. The simulation results are present in Fig.~\ref{fig:LiTFSI}, where panel (a) illustrates the oxygen-oxygen RDFs and panel (b) displays the MSDs of Li$^+$ ions. 
It can be observed that the structure and the MSD features by the PPPM are well reproduced by the IRBE-based simulations, though a much smaller number of neighbors is used.
For comparison, the RDF curve by the RBE displays an obvious deviation at the peak, demonstrating that a direct use of smaller cutoff radius is less accurate. Consequenently, the MSD curve of the RBE is also gradually deviated from that of the PPPM with the increase of the simulation time, but the curve of the IRBE agrees well with the reference solution. It is noticed that the MSD of the lithium ions has a quadratic dependence on time before the phase separation, and the inflexion point at $50fs$ indicates that the system reaches the anomalous diffusive regime of the ions due to the correlation between dynamic heterogeneity and the local structural environment. This phenomenon is typical for ionic liquids at low temperature, and is captured by all three simulations.


\begin{figure*}[ht]	
	\centering
	\includegraphics[width=0.45\textwidth]{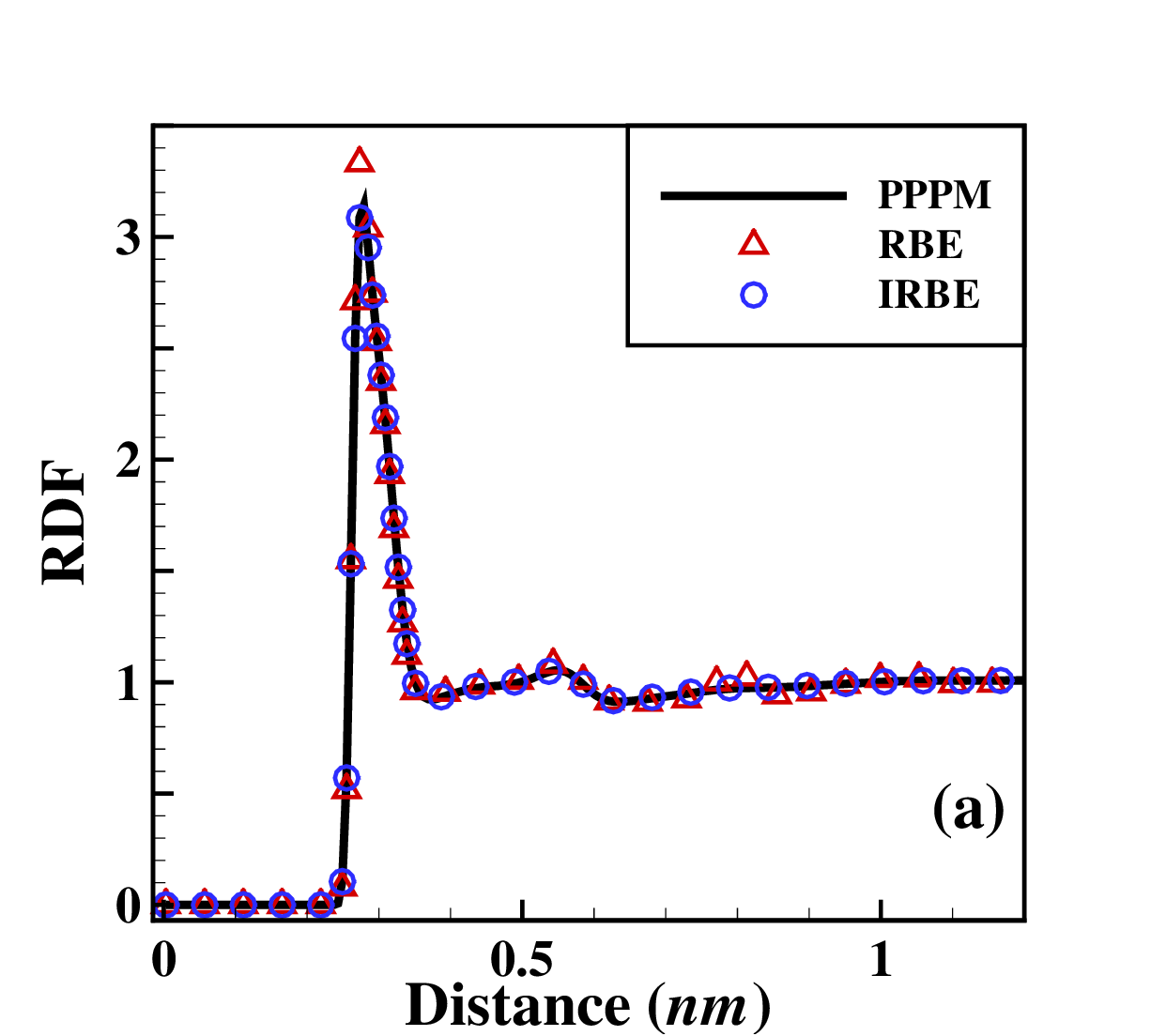}
	\includegraphics[width=0.45\textwidth]{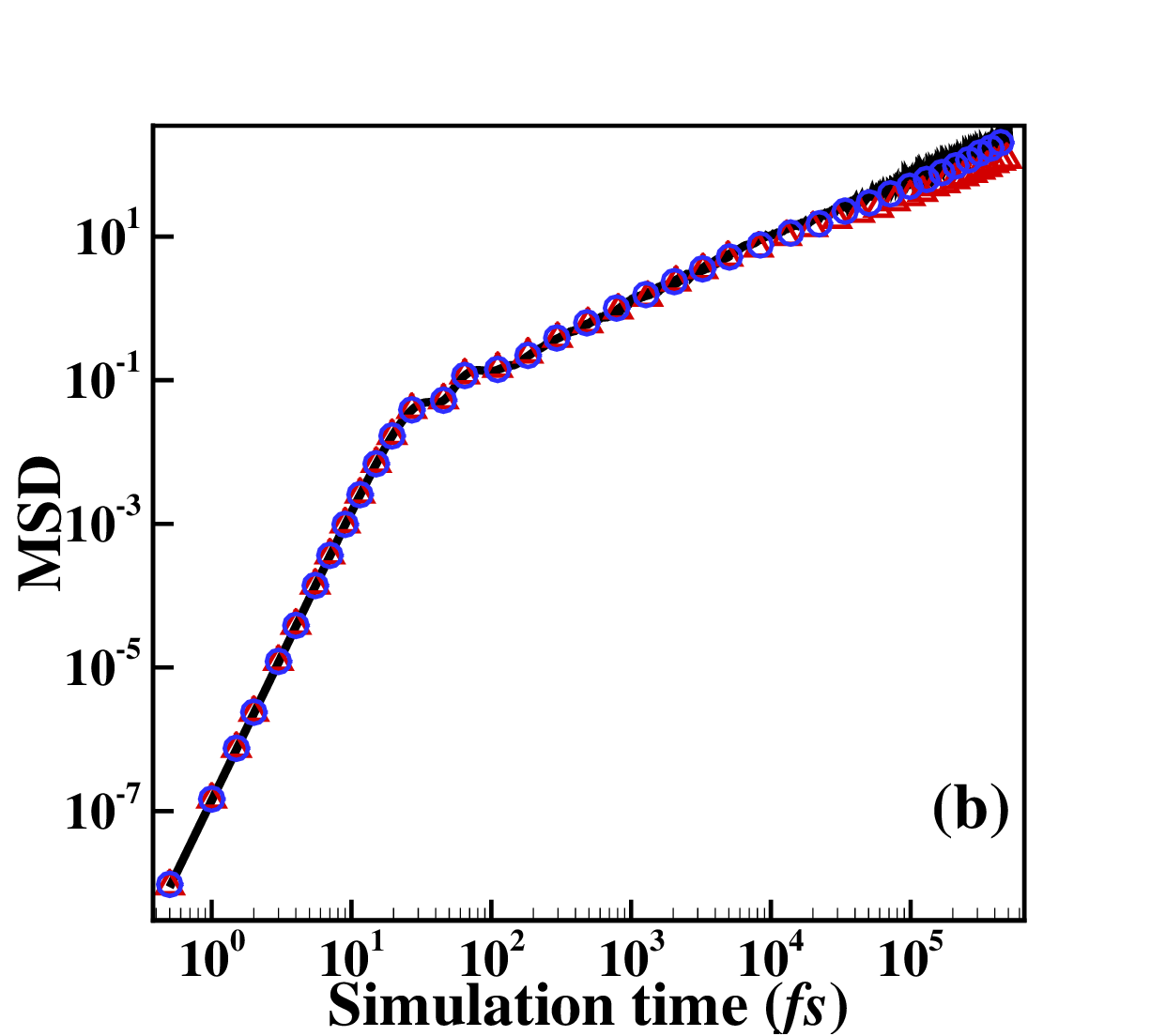}
	\caption{Oxegen-oxegen RDFs and MSDs of lithium ions calculated by MD simulations with the PPPM, RBE and IRBE methods for the LiTFSI electrolyte. The cutoff radius for the RBE and IRBE is $0.8 nm$, but the IRBE has additional 100 random neighbors for each atom.}
	\label{fig:LiTFSI}
\end{figure*}

\section{Conclusions}
In summary, we develop an improved version of the RBE method for non-bonded interactions in MD simulations,
which accurately and efficiently reproduces the structure and dynamical information of the RBE and the PPPM
method for benchmark problems.
The IRBE method is a stochastic approximation to the non-bonded forces, which benefits from random mini-batch
strategy in both the Fourier space for the long-range interactions and the real-space cutoff for the short-range
interactions, leading to efficient and memory-saving algorithm, achieving an optimal $\mathcal{O}(N)$ scaling.
Analysis on the stability and convergency are also provided.
The simulations on primitive-model electrolyte and all-atom bulk water systems are conducted to demonstrate the accuracy and attractive performance of the IRBE algorithm.

The IRBE method is implemented in parallel programming with the MPI and OpenMP. It will be also great significance
for broader use of the algorithm if there is a version with the graphic processing units, and we are working on
it currently. Moreover,
extension of the method to quasi-2D systems with planar interfaces, in particular with the dielectric mismatch \cite{liang2020harmonic,LIANG2022108332,yuan2021particle} is straightforward and shall be studied in our subsequent work. 
The IRBE method is expected to play an important role for simulating problems in applications
at chemical physics, materials and biological systems.

\section*{Acknowledgements}
The authors acknowledge the financial support from the National Natural Science Foundation of China (grant No. 12071288), Science and Technology Commission of Shanghai Municipality (grant Nos. 20JC1414100 and 21JC1403700), and the support from  the HPC center of Shanghai Jiao Tong University.

\section*{Conflict of interest}
The authors declare that they have no conflict of interest.

\section*{Data Availibility Statement}
The data that support the findings of this study are available from the corresponding author upon reasonable request.


\end{document}